\documentclass[10pt,journal,compsoc]{IEEEtran}
\usepackage{amsmath,amssymb,amsthm,graphicx,booktabs,array,multirow}
\usepackage{algorithm,algpseudocode}
\usepackage{newtxtext,newtxmath}
\usepackage{xcolor,url,cite}
\usepackage[hidelinks]{hyperref}
\graphicspath{{figures/}}
\newtheorem{theorem}{Theorem}
\newtheorem{corollary}{Corollary}
\newtheorem{proposition}{Proposition}
\newcommand{\us}{\,\mu\mathrm{s}}
\newcommand{\ms}{\,\mathrm{ms}}

\newcommand{\Prb}{\mathbb{P}}

\title{Finite-Calendar Admission and Deadline-Aware Pruning for UWB Audio}
\author{Fengyun~Zhang,~\IEEEmembership{Member, IEEE, and Member, AES}%
\thanks{Fengyun Zhang is with the College of Artificial Intelligence, Southwest University, Chongqing, China (e-mail: zhangfy2019@mail.sustech.edu.cn).}}
\begin{document}
\maketitle
\begin{abstract}
Isochronous ultra-wideband (UWB) audio transport must deliver a complete service data unit (SDU) before its presentation deadline; fragment reception rate and aggregate throughput do not directly characterize this objective. We formulate admission and online service over a periodic calendar of finite exchange opportunities, explicitly accounting for fragmented SDUs, bounded attempts, calendar alignment, and timing uncertainty. For equal-duration, freely assignable opportunities, an interval-demand condition characterizes finite-job reservation feasibility, and an earliest-deadline-first construction realizes a feasible reservation. A separate necessary capacity bound identifies loads that cannot sustain complete-SDU delivery even with no transmission errors. For online overload control, we study a deadline-aware policy that discards an SDU when its outstanding fragments exceed the remaining eligible opportunities. Across 3,456 replicated scheduling runs and 432 supplemental sensitivity runs, pruning raises eight-stream burst-loss timely completion from 45.441\% to 59.399\%, a paired difference of $13.958\pm2.080$ percentage points (pointwise 95\% interval). At four streams, round robin instead attains slightly higher completion, while heavy overload causes pronounced per-stream imbalance. A finite design enumeration and an evaluated multi-stream fragmentation envelope quantify the interaction among scheduling, payload overhead, and capacity. These findings distinguish provable admission conditions from online deadline salvage and expose the aggregate-throughput versus per-stream-service tradeoff in UWB audio scheduling.
\end{abstract}
\begin{IEEEkeywords}
Ultra-wideband, isochronous audio, deadline scheduling, fragmentation, finite-calendar schedulability, reproducible simulation.
\end{IEEEkeywords}
\section{Introduction}
Low-latency wireless audio is fundamentally a deadline-constrained transport problem. A receiver presents audio in frames with a prescribed temporal relationship to source capture. For each frame, the useful transport event is the completion of its service data unit (SDU) before a presentation deadline, not the successful delivery of an arbitrary fragment. Fragment-level reliability can improve while the fraction of playable frames deteriorates: a transmission that cannot finish its remaining fragments by the deadline consumes channel service without contributing an on-time SDU. Consequently, mean throughput, packet reception ratio, and even success-conditioned latency are insufficient as stand-alone performance criteria. Bluetooth Low Energy (LE) Audio formalizes SDU intervals, lifetime controls, and presentation timing at different protocol layers~\cite{btll,isoal,bap}; comparable discipline is necessary when an application constructs an audio service over an independently configured ultra-wideband (UWB) link.

UWB is relevant to short-range audio networks because its time-structured transmissions and configurable data exchanges offer an opportunity to coordinate multiple periodic audio streams. Published UWB audio work and independent transceiver studies demonstrate the application context and device-specific communication characteristics~\cite{zhang2025,imfeld2023}. Yet a radio's physical rate or nominal exchange duration does not determine whether several fragmented streams can complete on time. A shared medium provides a discrete set of usable transmit--acknowledge opportunities, potentially clustered into short intervals and separated by long calendar gaps. A frame may be released between opportunities; different frames may have unequal relative deadlines; and individual fragments may consume repeated attempts. These effects couple fragmentation overhead, retry policy, temporal alignment, and admission in a manner that average-rate capacity calculations cannot describe.

The scheduling challenge has two complementary parts. \emph{Before service begins}, the system must decide whether a known workload can reserve a sufficient number of exchange opportunities in each frame's legal release-to-deadline window. This is a finite-calendar feasibility question, which depends on individual windows rather than utilization alone. \emph{During service}, loss and contention can create partially transmitted SDUs that no longer have enough remaining opportunities to finish. A scheduler that continues to serve such frames wastes capacity, whereas a scheduler that drops them aggressively can improve aggregate delivery but starve particular streams. An admission guarantee and an online discard criterion therefore answer different questions: one certifies an entire reservation under stated service assumptions, while the other detects a necessary condition for an individual frame's infeasibility.

The central problem considered here is as follows. A transmitter offers streams of periodic SDUs, with release times $r_j$, presentation-derived deadlines $d_j$, and fragment workloads $m_j$. A periodic opportunity calendar contains intervals of usable service with exchange duration $C$ and a maximum attempt count $K_j$ per fragment. Given a finite job set and timing uncertainty, we seek (i) a precise feasibility condition for reserving $m_jK_j$ exchanges without violating any deadline and (ii) an online rule that avoids spending future service on SDUs with no possible timely completion. The primary outcome is the complete-SDU timely-delivery fraction, interpreted together with the lowest per-stream completion rate, wasted opportunities, and success-conditioned delay. This problem statement does not assume that a particular controller exposes unrestricted cross-connection slot dispatch: the mathematical reservation model and the available SDK ownership rules are explicitly distinguished.

Existing scheduling and reliability literature supplies several foundations but leaves a specific integration question. Earliest-deadline-first (EDF) ordering and interval matching are established scheduling tools~\cite{liu1973,hall1935}; timely throughput and wireless delivery under unreliable service have a substantial history~\cite{hou2012}. LE Audio studies examine connected isochronous streams, retransmission persistence, and flushing~\cite{yan2024,yan2026}, while periodic wireless service work addresses resource reservation and latency~\cite{rtwt2024,fips2025}. The problem is therefore not to relabel EDF, TDMA, or frame expiry as a new protocol. Rather, it is to connect a complete-SDU presentation contract to a sparse \emph{finite exchange calendar}, derive the exact reservation implication under transparent assumptions, and quantify what online expiration achieves when an offered workload violates practical capacity constraints. This distinction has direct engineering value: it separates stream configurations that are admission-feasible from configurations for which a high aggregate completion percentage conceals unacceptable receiver-level service.

\noindent\textbf{Contributions.} The contributions are threefold: \emph{(1) Finite-calendar admission analysis:} we map presentation deadlines into exchange eligibility intervals, establish an exact interval-demand criterion for finite workloads with equally sized, freely assignable opportunities, provide an EDF reservation construction, and derive aligned-batch and aggregate-capacity bounds; \emph{(2) Deadline-aware online service:} we characterize an optimistic impossibility-based SDU pruning rule and evaluate its tradeoff against five baseline disciplines under independent and burst-correlated exchange failures, including paired uncertainty estimates, ablations, and per-stream fairness analysis; \emph{(3) Protocol-aware evaluation:} we quantify how a stream- and epoch-aware fragmentation envelope changes feasible workload, enumerate finite transport configurations, and trace the model to executable scheduling code and the documented TDMA controller interface. Across these elements, analytical guarantees and numerical policy outcomes are reported as distinct results.

\noindent\textbf{Paper organization.} Section~II reviews isochronous transport and related wireless scheduling. Section~III describes the SDU adaptation and controller interface, and Section~IV formalizes the traffic and opportunity model. Section~V derives finite-calendar feasibility and capacity bounds. Section~VI specifies the online policy and bounded design search. Section~VII presents the numerical results and reproducibility checks. Section~VIII examines the controller mapping and timing semantics, Section~IX discusses the scheduling tradeoffs, and Section~X concludes the paper.

\section{Background and Related Work}
\subsection{Scope of Prior Work}
Four research areas intersect in this problem: UWB link implementation and audio transport; isochronous service and flushing semantics; deadline-aware scheduling on unreliable shared channels; and controller-calibrated evaluation. They overlap in vocabulary but differ in what is guaranteed. In particular, a throughput model does not establish a bound on complete-SDU age, and a controller lifetime limit does not by itself establish a capture-to-playback bound.

We examine the directly relevant UWB audio and transceiver studies, UWB interoperability literature, recent LE Audio modeling, multiparty communication, periodic wireless scheduling, and end-to-end reliability analysis~\cite{zhang2025,imfeld2023,coppens2022,yan2024,yan2026,mantis2026,rtwt2024,fips2025}. The comparison focuses on service semantics, controllable scheduling variables, and actual evidence rather than claiming priority over entire classes of work. The 2024 CIS performance model and 2026 lifetime/flushing analysis are particularly close conceptual comparators~\cite{yan2024,yan2026}: explicit packet lifetime and retransmission budgeting cannot be presented as new ideas merely because they are applied to another PHY. Our narrower analytical object is the finite, possibly gapped \emph{exchange calendar} and complete-SDU attempt workload; its implementation and deployment qualifications remain essential.

\subsection{Standards: What Can and Cannot Be Mapped}
The initial premise that IEEE 802.15.4z does not supply the complete LE Audio service model should not be inflated into a claim that the whole IEEE MAC lacks timing mechanisms. The 4z amendment is identified with enhanced UWB PHYs and associated ranging techniques~\cite{ieee4z}; this study has not exhaustively audited every normative MAC clause. UWB interoperability literature likewise cautions that commercial MAC and upper layers are frequently proprietary~\cite{coppens2022}. The SDK discussed below is a distinct, version-dependent implementation substrate; its use is neither evidence of IEEE 802.15.4z/4ab compliance nor a substitute for a standard-conformance assessment.

At the time of writing, the publicly accessible IEEE P802.15.4ab project page still describes an active draft project and names low-power, low-latency streaming and high-rate streaming among its scope items~\cite{ieee4ab}. Such scope language does not establish the content of nonpublic draft clauses or provide a ready-made deadline API for the platform studied here. The proposed SDU abstraction is a research-level scheduling interface rather than a normative amendment or a claimed standards implementation.

In the inspected Bluetooth Core 5.4 Link Layer, \texttt{ISO\_Interval} is expressed in 1.25 ms units, with values 4--3200, or 5 ms--4 s; CIS \texttt{NSE} is 1--31, and directional \texttt{BN} is 0--15~\cite{btll}. These are protocol parameters with specific packing and timing constraints, not arbitrary UWB design ranges. CIS acknowledgment and flushing should also not be copied indiscriminately onto BIS. Broadcast retransmission structure includes \texttt{IRC} and \texttt{PTO}; it is not the same feedback ARQ as a point-to-point CIS.

Most importantly, $\mathrm{FT}\,I$ is not a universal complete transport-latency expression. For framed CIS in the inspected Core 6.0 ISOAL, Section 3.2.1 gives
\begin{equation}\label{eq:bleframed}
 L_{\rm CIG}^{\rm framed}=G+\mathrm{FT}\,I+F_{\rm framing},
\end{equation}
where $G$ is CIG\_Sync\_Delay and the framing term depends on the mode. Section 3.2.2 gives, for unframed CIS,
\begin{equation}\label{eq:bleunframed}
 L_{\rm CIG}^{\rm unframed}=G+(\mathrm{FT}-1)I+(U-1)S,
\end{equation}
where $U$ is USPI and $S$ is SDU\_Interval~\cite{isoal}. The specification excludes implementation-specific processing/internal transport delays from this latency. For reception by a Unicast Server or Broadcast Sink, BAP Presentation Delay spans the SDU synchronization reference to presentation. For transmission by a Unicast Server, BAP Section 7.1.2 instead defines acquisition-start to SDU synchronization reference. Neither should be indiscriminately called microphone-to-ear delay~\cite{bap}. These distinctions motivate a fresh end-to-end budget instead of renaming FT.

\subsection{UWB Data and Audio Research}
Coppens \emph{et al.} organize standards, radios, and interoperability constraints~\cite{coppens2022}. Their distinction between PHY compatibility and MAC/upper-layer compatibility is particularly relevant: a common spectral designation does not imply that audio groups or time references interoperate. Imfeld \emph{et al.} measure SR1000 energy, data transfer, and ranging on a documented evaluation kit~\cite{imfeld2023}. That work supports the choice of a SPARK platform for investigation, but does not validate the present SR1100-family scheduler or supply a calibrated energy model for it.

Zhang's 2025 preprint is direct prior art, not merely a generic UWB prospect paper~\cite{zhang2025}. Its Section III describes a UWB audio framework, audio/wireless cores, synchronization and acknowledgment frames, bandwidth scheduling, and development-board audio evaluation. Consequently, a broad ``first UWB audio architecture'' claim is untenable on the retrieved evidence. Our work instead examines explicit SDU expiration and service reservations with inspectable source and provenance. The preprint's reported utilization and success figures are not imported as our results or treated as a matched baseline, because workloads, measurement definitions, and radios differ.

Ranging-focused interference evaluation also exists in recent work~\cite{tiemann2022}; it is not an audio transport benchmark. Nelson \emph{et al.}'s interference study combines measured spectral activity and a modeled UWB receiver~\cite{nelson2024}. We cite its explicitly fixed 2024 version, rather than conflate later title revisions. It warns against treating independent packet errors as a measured radio model. Historical high-rate WPAN scheduling also addresses reserved isochronous traffic~\cite{kim2014}; the fact that the present target is impulse-radio audio does not make periodic reservations conceptually unprecedented. This historical evidence and the SDK audit together move the research gap from ``add a superframe'' to ``define and validate the right SDU service contract.''

\subsection{LE Audio, Concealment, and Deadline Scheduling}
LE Audio research provides direct comparators. Yan \emph{et al.} study CIS performance and the influence of lifetime- and flushing-based mechanisms~\cite{yan2024,yan2026}; their work precludes attributing novelty to expiration policies alone. MANTIS studies implemented multiparty LE Audio communication using coordinated broadcast roles~\cite{mantis2026}, whereas the present test firmware is limited to a two-node point-to-point configuration. WiLE-Audio considers cross-technology extension of audio coverage~\cite{wile2025}, addressing a different physical-layer objective. These studies underscore why a fair cross-technology comparison needs a complete implementation and reconciled presentation time references. Our evaluation instead compares schedulers over \emph{one fixed modeled UWB opportunity service}.

Assistive-listening work and the Bluetooth audio evolution discussion emphasize deployment and application context~\cite{bruce2025,baghel2023}. LC3 characterization and BAP distinguish codec operating points from transport framing~\cite{lc3white,bap}. BAP lists nominal 7.5/10 ms modes; its 44.1 kHz note gives effective durations of approximately 8.16327/10.88435 ms. Therefore, our generic 15/20 ms batching sensitivity is not called a native LC3 frame mode. No decoder or packet-loss concealment (PLC) is executed here, and a missing-frame run is not a perceptual quality score.

Classical EDF and Hall matching provide the appropriate theoretical foundation~\cite{liu1973,hall1935}. Hou and Kumar explicitly distinguish timely throughput over unreliable wireless channels from conventional queue stability~\cite{hou2012}. Recent R-TWT analysis jointly selects periodic service duration and period under retransmission and delay constraints~\cite{rtwt2024}; FIPS stresses that wireless-segment reliability need not survive end-to-end scheduling composition~\cite{fips2025}. Pang \emph{et al.} combine BLE throughput/reliability modeling with practical validation~\cite{pang2024}. These works preclude claiming that deadline budgets, periodic access, or retry/reliability tradeoffs are themselves new. Table~\ref{tab:related} distinguishes the modeled contribution from related protocol and scheduling mechanisms.

\begin{table*}[t]
\caption{Approach landscape and evidence boundaries. ``Not established'' does not mean absent from an inaccessible paper.}\label{tab:related}
\centering\small
\begin{tabular}{p{2.9cm}p{4.0cm}p{4.3cm}p{4.3cm}}\toprule
Work / substrate & Relevant scope & Evidence inspected & Difference relative to this study\\\midrule
UWB standards survey~\cite{coppens2022} & Radio and layer interoperability & Published full text & No audio QoS theorem inferred from radio compatibility.\\
SPARK characterization~\cite{imfeld2023} & SR1000 transfer, energy, ranging & Published metadata and author preprint & Different family/settings; does not calibrate our SR11xx service time.\\
UWB audio architecture~\cite{zhang2025} & Audio/WPS design and board experiment & Preprint Section III & Direct architecture prior art; we study explicit SDU deadlines, not first audio transport.\\
CIS and lifetime/flushing models~\cite{yan2024,yan2026} & Isochronous persistence/performance & Metadata and available abstract & Direct retry/expiry overlap; overlap concerns SDU lifetime and retries.\\
MANTIS / WiLE-Audio~\cite{mantis2026,wile2025} & Multiparty BIS / cross-technology coverage & MANTIS full text; WiLE metadata & System-level implementations; distinct scheduling objectives and evidence.\\
R-TWT / FIPS~\cite{rtwt2024,fips2025} & Periodic wireless service / end-to-end QoS & Accessible primary texts & Related budget composition and service calendars, different radio/application.\\
SPARK SDK 2.4.0~\cite{spark} & TDMA, bounded ARQ, fragmentation, PCM & Supplied headers, examples, documentation & Mechanisms inherited; application SDU semantics are the intended extension.\\
Present study & Calendar model, finite test, pruning policies & Analysis, code, seed-level tables & Finite calendar analysis and numerical outcomes; firmware interface independently audited.\\\bottomrule
\end{tabular}
\end{table*}

\section{Isochronous Adaptation and Controller Mapping}
\subsection{Application-Level Isochronous Adaptation}
Figure~\ref{fig:architecture} distinguishes the application scheduling contract, the vendor link implementation, and the receiver presentation path. The proposed adaptation boundary accepts a complete SDU together with its stream identity, release reference, deadline, and legal codec configuration. It supplies sequence-aware fragment PDUs and must report either a complete SDU or a frame-loss event before presentation. Native fragmentation can be reused if its cancellation and reassembly semantics are compatible; duplicating it without a reason would add overhead rather than a contribution.

The original byte-level reference and conditional firmware probe share an eight-byte envelope: a 32-bit little-endian sequence number, a 16-bit local-millisecond release field, and 8-bit fragment index/count. Its 240-byte known-pattern SDU occupies 96-, 96-, and 72-byte PDUs at a 96-byte PDU limit. The Python receiver supports fragment reordering, duplicate checking, bounded reassembly, and caller-supplied local deadlines. These properties establish single-context byte-level behavior only. In particular, the short timestamp is a local diagnostic field, not a negotiated global playback reference.

To distinguish single-context and multi-stream packet semantics, we additionally implement a \emph{distinct Python-only group envelope}. Its 24-byte header contains a group ID (16 bits), stream ID (16 bits), session epoch (32 bits), SDU sequence (32 bits), group-clock presentation deadline (64 bits), fragment index (16 bits), and fragment count (16 bits). Reassembly is keyed by $(\mathrm{group},\mathrm{stream},\mathrm{epoch},\mathrm{sequence})$ and enforces bounded state, consistent fragment metadata, and expiration. Interleaved streams with identical sequence numbers are independently reassembled in regression tests; epoch mismatch, active-SDU expiry, and bounded duplicate-suppression behavior are also tested. Completely absent SDUs require an external expected-release schedule to trigger loss indications. The numerical runs use the eight-byte format; the 24-byte Python envelope is a separate reassembly implementation and is not described as an interoperable security protocol. The timestamp is meaningful only after external group-clock calibration and authenticated epoch establishment.

The two reference formats should not be conflated. The original eight-byte receiver aliases distinct streams at equal sequence numbers unless external connection contexts isolate them; its 16-bit millisecond field wraps every 65.536 s. The 24-byte alternative repairs the immediate identity ambiguity but adds sixteen bytes per fragment, increases the modeled 240-byte SDU from three to four fragments at $P=96$ bytes, and therefore changes capacity. Cryptographic integrity, replay resistance beyond the bounded tombstone cache, clock recovery, cross-device deadline interpretation, congestion feedback, codec negotiation, and radio/firmware portability remain outside the Python extension. The two envelopes consequently have distinct workload and capacity parameters. Table~\ref{tab:envelope} makes this otherwise easily obscured payload cost explicit.

\begin{table}[t]
\caption{Two receiver envelopes (240-byte SDU, $P=96$ bytes); $(g,i,e,n)$ denotes group, stream, epoch, sequence.}
\label{tab:envelope}\centering\small
\begin{tabular}{p{2.4cm}p{2.0cm}p{2.7cm}}\toprule
Property & Probe-compatible & Multi-stream Python\\\midrule
Header & 8 B & 24 B\\
Body per PDU & 88 B & 72 B\\
Fragments/SDU & 3 & 4\\
Reassembly key & Sequence & $(g,i,e,n)$\\
Deadline & External local deadline & Group-clock deadline*\\
Radio integration & Compiled probe & Not integrated\\ \bottomrule
\multicolumn{3}{p{7.4cm}}{\footnotesize *The timestamp field is not evidence of calibrated cross-device clocks.}
\end{tabular}
\end{table}

\begin{figure*}[t]\centering
\includegraphics[width=.90\textwidth]{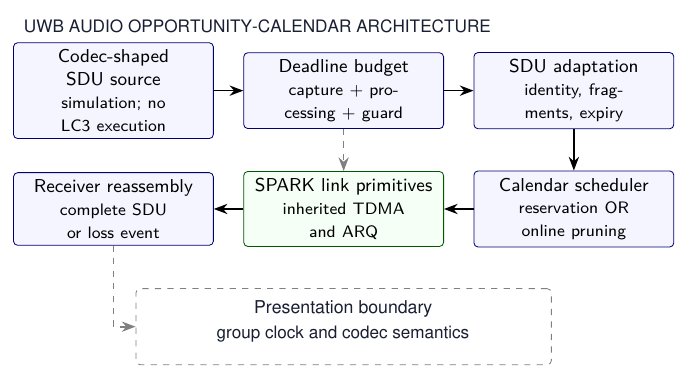}
\caption{Scheduling and adaptation boundaries. The finite admission test and online pruning policy are distinct; configured SDK connection/slot ownership sits below them. LC3 presentation and cross-device clock calibration are not part of the evaluated path.}\label{fig:architecture}
\end{figure*}

\subsection{Group Calendar, Control Plane, and Playout Boundary}
A conceptual UWB group shares a period, epoch, opportunity calendar, and presentation contract. Its control plane should negotiate legal PDU budgets, maximum SDU size, retry caps, timing uncertainty, and the handling of a missed or incomplete SDU. The present model only multiplexes logical streams at one transmitter; it does not schedule arbitrary independent transmitters in a shared slot. The SDK's same-slot connection priorities have compatibility requirements and cannot be assumed to solve general distributed EDF~\cite{spark}. More fundamentally, the SDK's network-wide TDMA timeslot schedule is fixed while a network runs, even though configured connections may share a timeslot under a priority rule~\cite{spark}. The simulation's ability to select any eligible stream in any modeled opportunity is therefore an \emph{upper-level allocation relaxation}. An implementable variant must either multiplex several application streams inside one legally configured connection (and select complete SDUs before queue insertion) or restrict choices to compatible connections assigned to a shared timeslot. In both cases, controller FIFO behavior, internal retries, and in-flight packets can obstruct atomic expiration. The independently compiled probe does not establish control of these operations.

The control plane should check per-connection eligibility and conservative admission before enabling another stream. If admission succeeds and measured exchange service is bounded as modeled, a reserved schedule can provide the promised attempt opportunities; it cannot promise RF success under arbitrary losses. Runtime pruning is a different, less conservative mode for overbooking or overload experiments. The implementation must advertise which mode is used; otherwise a system could silently replace a proved reservation with a heuristic and retain an invalid guarantee. Grouped receiver presentation also needs its own error bound, such as a maximum inter-receiver clock residual, separate from the single-link data deadline.

At the receiver, SDU reassembly and codec presentation must remain separate stages. A fragment arriving after a calibrated completion deadline should not delay later SDUs; a partial frame should trigger the codec's supported loss indication. Which concealment is best depends on actual decoded content and prior loss history. Therefore, the current architecture exposes a frame-loss event but does not claim to optimize perceptual concealment. BAP/ISOAL are useful design references for this separation, not evidence that the SPARK probe implements their timing semantics~\cite{isoal,bap}.

\section{Traffic, Deadline, and Opportunity Model}
\subsection{Timeline, Workload, and Clock Contract}
All modeled times are integer microseconds. A stream $i$ has capture-frame duration $F_i$, encoded bitrate $b_i$ bit/s, and SDU size
\begin{equation}\label{eq:size}
 B_i=\left\lceil\frac{b_iF_i}{8\cdot10^6}\right\rceil,\qquad
 m_i=\left\lceil\frac{B_i}{P-h}\right\rceil,
\end{equation}
where $P$ is the application PDU budget including an $h$-byte adaptation header. This constant-rate source emulates packet size and timing only. A real codec can impose different legal sizes, sample counts, look-ahead, or processing costs.

Let $r_j$ denote the time SDU $j$ becomes ready for radio service, $P_i^{\rm pres}$ the capture-to-presentation budget, and $L_i^{\rm other}$ the non-capture codec/IO/playout cost. Here $r_j$ denotes availability of the encoded SDU at the scheduling boundary; the subtraction of $F_i$ and $L_i^{\rm other}$ assumes the stated capture and non-radio stages are measured with the same reference convention. Encoding delay must be accounted for exactly once, not simultaneously deducted in this budget and counted again as an additional delay after $r_j$. The available radio budget is
\begin{align}\label{eq:budget}
 D_i &=P_i^{\rm pres}-F_i-L_i^{\rm other},\\
 d_j&=r_j+D_i-\epsilon_i,\label{eq:deadline}
\end{align}
where $\epsilon_i$ bounds the application clock/dispatch uncertainty. We require $D_i>\epsilon_i$. Baseline values $25\ms-10\ms-2\ms$ yield $D_i=13\ms$ and an effective $12.960\ms$ readiness-to-completion budget. Neither the 2 ms cost nor the 40 $\mu$s uncertainty is measured hardware timing.

At a real receiver, a group-clock deadline $d_j$ must be mapped onto its local oscillator. If $t_{\rm rx}=\gamma t_{\rm group}+\delta+e(t)$, the effective guard must cover uncertainty in $\gamma$, $\delta$, quantization, synchronization update age, and scheduling callback jitter. No such calibrated bound is available from the current probe. A short source timestamp does not provide a common clock. Accordingly, the simulator assumes the budget clock, the Python reassembler accepts an externally supplied local deadline, and the firmware records local times without claiming one-way latency.

\subsection{Sparse Opportunity Calendar and Exchange Cost}
A single central transmitter multiplexes logical streams over a half-duplex radio. It has a superframe $H$, exchange-opportunity duration $C$, nominal share $\rho$, and $Q=\lfloor\rho H/C\rfloor$ allocated opportunities. The executed uniform-prefix calendar is
\begin{equation}\label{eq:calendar}
 s_{\ell,q}=\ell H+qC,\quad q=0,\ldots,Q-1.
\end{equation}
All policies gate a ready SDU to $a_j=H\lceil r_j/H\rceil$. Thus, arrival alignment is identical in every comparison. The clustered calendar and anchor gating are research choices, not a requirement of all SPARK applications. Other transmissions, ranging, beacons, or guard intervals must be removed from the usable calendar before applying the analysis.

One opportunity contains at most one modeled DATA/ACK exchange, including its turn-around and service margin. The audited SR1120 IOOK/FEC1.25 documentation gives a DATA-only RF expression
\begin{equation}\label{eq:air}
 T_{\rm data}(N)=15.97+0.4883N\quad\mu\mathrm{s}.
\end{equation}
We do not interpret this as the complete exchange or minimum schedulable slot. The default $C=200\us$ is anchored to an SDK audio example; the validation check adds assumed ACK RF time 12 $\mu$s, turn-around 3 $\mu$s, and the chosen guard to the DATA expression before comparing with $C$. Once a full $C$ is charged, those components must not be charged again at completion.

The radio-on proxy adds DATA RF time and 12 $\mu$s per attempted exchange, even for a failed exchange. It is an intentionally approximate accounting measure, not a current trace. The primary resource metric is occupied slot time divided by the source-plus-drain horizon. CCA, sleep/wake cost, clock recovery, and RF adaptation are not calibrated. Actual throughput is determined by usable exchange opportunities and payload, not by equating UWB spectral bandwidth with payload bitrate.

\subsection{Losses, Service Work, and Objectives}
Each fragment permits $K_i$ \emph{total attempts}, including the first. A successful modeled exchange advances one fragment; exhausting its cap aborts the entire SDU, even if earlier fragments have already been received. The simulator abstracts DATA and ACK failure into one exchange-loss event. It does not model the asymmetry in which a receiver obtains DATA but the sender misses ACK. Hence, exchange-confirmed modeled completion must not be confused with a measured receiver-completion counter under ACK loss.

For independent exchange losses with probability $p$, an uncongested SDU with $m$ fragments has
\begin{equation}\label{eq:iid}
 R_{\rm iid}(m,K,p)=(1-p^K)^m.
\end{equation}
The formula requires enough timely attempts and suitable independence; it is not an overloaded-queue guarantee. A two-state burst process is evaluated separately. With good-state loss zero, bad-state loss one, $\beta=1/8$, and $\alpha=p\beta/(1-p)$, its stationary bad-state fraction is $p$. A transition occurs once per calendar opportunity, including idle opportunities, and the initial state is stationary. Thus, the average burst duration is eight \emph{opportunity indices}, not a fixed wall-clock coherence time. Changing $H$ or $C$ can consequently change the implied physical-time correlation structure. This deliberately simple member of the Gilbert--Elliott family~\cite{gilbert1960,elliott1963} is not fitted to a SPARK channel.

The empirical objective is timely complete-SDU fraction, accompanied by missing-frame runs, per-stream completion, conditional latency, and occupied opportunities. A design objective selects $(F,H,P,K,b)$ subject to service feasibility and target reliability. A perceptual distortion objective or PLC-aware utility would be a useful extension, but no such objective is evaluated. Table~\ref{tab:params} separates source-derived settings from explicit assumptions.

\begin{table}[t]
\caption{Main simulation parameters and provenance.}\label{tab:params}\centering\small
\begin{tabular}{p{2.6cm}p{1.7cm}p{3.1cm}}\toprule
Parameter & Baseline & Basis\\\midrule
Capture frame $F$ & 10 ms & Codec-shaped source assumption\\
Rate $b$ per stream & 192 kbit/s & Constant-byte workload, not LC3 execution\\
Presentation budget & 25 ms & Research choice\\
Other processing & 2 ms & Assumption, no codec timing\\
Clock/dispatch guard & 40 $\mu$s & Assumption\\
Superframe $H$ & 5 ms & Research choice\\
Exchange slot $C$ & 200 $\mu$s & SDK example, not measured minimum\\
Nominal share $\rho$ & 0.32 & Research choice; integer quantization\\
PDU/header $P,h$ & 96 / 8 B & Adaptation design\\
Fragments $m$ & 3 & $240/(96-8)$ rounded up\\
Total tries $K$ & 3 & Research choice; 2 retries\\
Exchange loss $p$ & 0.1 & Specified IID or burst chain\\
Runs & 12 seeds, 6 s & Seed-level independent replications\\\bottomrule
\end{tabular}
\end{table}

\section{Finite-Calendar Admission Analysis}
\subsection{Exact Finite-Calendar Reservation Condition}
Let $s_1<\cdots<s_M$ be nonoverlapping exchange opportunities of equal duration $C$. Each finite job $j$ has eligibility $a_j$, completion deadline $d_j$, and integer reservation work $w_j=m_jK_j$. Define its eligible slot-index set
\begin{equation}\label{eq:window}
 I_j=\{k:a_j\le s_k,\ s_k+C\le d_j\}.
\end{equation}
These sets are contiguous in slot-index space even when the wall-clock calendar contains gaps. Assumptions are explicit: one exchange fits one slot; service can be interleaved between exchanges; ACK outcome is available by the next usable exchange; there is no additional fragment cooldown; and there are no external precedence or shared-resource constraints beyond serialization. Reserved quanta are fungible attempt capacity, not simultaneous transmissions.

\begin{theorem}[Finite unit-exchange reservation]\label{thm:hall}
The finite jobs admit a reservation of $w_j$ slots per job inside $I_j$ if and only if each positive-work job has nonempty $I_j$ and, for every contiguous slot-index interval $[u,v]$,
\begin{equation}\label{eq:hall}
 \sum_{j:I_j\subseteq[u,v]}w_j\le v-u+1.
\end{equation}
Visiting slots chronologically and assigning each to an eligible unfinished job with earliest deadline returns a feasible reservation whenever one exists.
\end{theorem}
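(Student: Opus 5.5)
The plan is to treat the reservation problem as a bipartite $b$-matching: each job $j$ must receive $w_j$ distinct slots from $I_j$, and each slot serves at most one job. Necessity is immediate. If $w_j>0$, job $j$ needs at least one slot, so $I_j$ must be nonempty. For any interval $[u,v]$, every job with $I_j\subseteq[u,v]$ must receive all of its $w_j$ slots inside $[u,v]$, and those slots are distinct, which gives \eqref{eq:hall}. Contiguity of $I_j$ follows directly from \eqref{eq:window}, because both conditions $a_j\le s_k$ and $s_k+C\le d_j$ are monotone in $k$ when the $s_k$ are increasing. So each $I_j=[\alpha_j,\beta_j]$ is an index interval.

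For sufficiency, and simultaneously for the EDF claim, I would prove directly that the chronological EDF rule never fails when \eqref{eq:hall} holds. Break deadline ties by job index, so that EDF is determined by the ordering $\beta_j$ (equivalently $d_j$). I will use a contradiction argument in the style of Liu--Layland busy-period reasoning.

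\emph{Step 1.} Suppose EDF finishes with some job $j^\*$ short of its $w_{j^\*}$ slots. Then some slot in $I_{j^\*}$ went to a job with a later deadline, or was left idle, while $j^\*$ was still unfinished. Let $v=\beta_{j^\*}$.

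\emph{Step 2.} Choose $u$ as follows. Take the latest slot index $u-1\le v$ that is either idle or assigned to a job $j'$ with $\beta_{j'}>v$. If no such slot exists, set $u=1$.

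\emph{Step 3.} Every slot in $[u,v]$ is assigned to a job with $\beta\le v$. Each such job was eligible at that slot, so its $\alpha$ is at most that slot index.

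\emph{Step 4.} This is the key claim: every job served in $[u,v]$, and also $j^\*$, has $\alpha\ge u$. Suppose instead some job $j$ with $\beta_j\le v$ had $\alpha_j<u$ and was unfinished at slot $u-1$. Then at slot $u-1$ it was eligible, since $\alpha_j\le u-1\le\beta_j$. EDF would therefore have chosen it, or some other job with deadline at most $v$, rather than leaving the slot idle or serving a job with a later deadline. That contradicts the choice of $u-1$. A job with $\alpha_j<u$ that was already finished at $u-1$ gets no slots in $[u,v]$, so it can be ignored. The job $j^\*$ itself is unfinished throughout, so the same argument shows $\alpha_{j^\*}\ge u$.

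\emph{Step 5.} Let $S$ be the set of jobs served in $[u,v]$ together with $j^\*$. Every job in $S$ satisfies $I_j\subseteq[u,v]$. All $v-u+1$ slots in the interval are consumed by jobs in $S$, and $j^\*$ still falls short. Hence $\sum_{j\in S}w_j>v-u+1$, which violates \eqref{eq:hall}. Nonempty $I_{j^\*}$ ensures $v$ is well defined; if $I_{j^\*}$ were empty, the job would fail the first condition directly.

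\emph{Step 6.} EDF outputs a feasible reservation whenever \eqref{eq:hall} holds. Since that condition is necessary, EDF succeeds whenever any feasible reservation exists, which proves both directions of the theorem.

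I expect the main obstacle to be Step 4. Slots can be idle for two different reasons: no job is released yet, or only jobs with later deadlines are available. I also need to make sure that jobs released before $u$ but finished before $u$ do not spoil the inclusion $I_j\subseteq[u,v]$; since they are not counted in $S$, they do no harm. As a fallback, I would argue sufficiency via Hall's theorem on the slot-replicated bipartite graph~\cite{hall1935}. There, the worst-case deficient set of jobs can be reduced to one whose union of intervals is a single interval, because intervals that are not connected split into independent subproblems. That route yields sufficiency independently of the EDF argument.
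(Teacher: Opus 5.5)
Your proof is correct, but it follows a different route from the paper. The paper proves sufficiency non-constructively. It builds the clone graph, takes a Hall-deficient clone set, closes it under clones of the same jobs, and splits the union of the windows into contiguous components, one of which must violate \eqref{eq:hall}. It then proves the EDF claim separately, by an exchange argument that converts any feasible reservation into the EDF one slot by slot.

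You instead prove directly that chronological EDF cannot leave a job short when \eqref{eq:hall} holds, using a busy-period argument. Take the latest slot at or before $v=\beta_{j^*}$ that is idle or serves a job whose window ends after $v$. It bounds an interval $[u,v]$ whose served jobs, together with $j^*$, all have windows inside $[u,v]$ and overload it. This single argument gives sufficiency and EDF optimality at once and avoids Hall's theorem. It also produces an explicit violating interval whenever EDF fails, which is exactly the certificate that justifies the paper's practice of running the greedy reservation instead of enumerating interval inequalities.

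The paper's split buys modularity. Its matching formulation is what carries over, with a general Hall or flow test replacing the interval test, to the connection--slot eligibility graphs the paper flags. Your argument depends entirely on contiguous windows and on the EDF rule. Your Hall fallback is essentially the paper's sufficiency proof.

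Fix one sentence. Step~1's claim that some slot of $I_{j^*}$ was idle or went to a later-deadline job is false. Because $j^*$ is unfinished and eligible at every slot of $I_{j^*}$, EDF never idles there and always serves a job with deadline at most $d_{j^*}$, hence window end at most $v$. That fact is precisely what forces $u\le\alpha_{j^*}$, as your Step~4 re-derives. Your definition of $u$ in Step~2, with the default $u=1$, does not use Step~1, so the argument survives once you delete or negate that sentence.

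Similarly, ordering by $\beta_j$ is not equivalent to ordering by $d_j$ when distinct deadlines share a window end. You only need that the EDF choice has minimal $\beta$ among eligible unfinished jobs, and this holds for any tie-breaking because $\beta_j$ is nondecreasing in $d_j$.
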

\begin{proof}
Necessity follows because jobs whose complete eligibility intervals are contained in $[u,v]$ cannot receive service elsewhere. For sufficiency, create $w_j$ unit clones of each job, each adjacent to every slot in $I_j$. A matching covering all clones is precisely a reservation. Suppose Hall's condition fails for some clone subset. Enlarge it to include all clones of every job it intersects; this cannot remove a deficiency, because added clones of the same jobs add no neighbors. The union of the corresponding interval neighborhoods decomposes into disjoint contiguous components. Each selected job interval belongs entirely to one component. Since the total selected demand exceeds the total neighborhood size, at least one component has selected demand exceeding its slot count. Counting all jobs whose intervals are contained in that component only increases the demand, contradicting~\eqref{eq:hall}. Hall's theorem therefore gives a covering matching~\cite{hall1935}.

For the EDF construction, take a feasible unit schedule agreeing with EDF up to the first different assignment. EDF's chosen clone $e$ is eligible now; in the feasible schedule it is assigned to a later slot. If the current slot is idle, move $e$ earlier. Otherwise the current slot serves a clone $x$ that is already released, with deadline no earlier than $e$'s. Swap their slots. Clone $x$ remains released at the later slot and meets its deadline because $e$ did; $e$ is eligible at the current slot by construction. The schedule now agrees with EDF for one more slot. Iteration establishes feasibility. This is a unit-calendar exchange argument, not a new unrestricted EDF optimality theorem~\cite{liu1973}.
\end{proof}

The condition concerns a finite, known set of jobs, and it relies on a complete bipartite eligibility relation between each job and all slots in its interval. Fixed connection ownership restricts that relation; a general SDK-owned assignment must use its actual job-slot graph rather than blindly apply the interval-only EDF construction. An aggregate utilization bound alone cannot replace the finite test. For example, let $H=5$ ms, $C=0.2$ ms and $Q=8$, so opportunities occupy only the first 1.6 ms of each period. A one-quantum job ready at 2 ms with deadline 3 ms has no eligible slot, despite its very small average demand. Anchor gating delays its eligibility further. This is exactly the type of failure that~\eqref{eq:window} detects and an average-rate comparison misses.

The implementation checks greedy reservations without explicitly enumerating all interval inequalities in routine operation; this is correct only for the stated interval eligibility sets. Its small-instance regression suite compares against a separate recursive assignment oracle on 171 exhaustive instances; 1,200 randomized calendars additionally compare the interval test. These checks support correctness of the implementation, not a replacement for the proof or validation of physical assumptions.

\subsection{Tight Aligned-Batch Bound and Retry Qualification}
\begin{corollary}[Aligned batches]\label{cor:batch}
Suppose every stream emits one SDU at $\ell H$, $F_i=H$, the common effective relative deadline $\bar D\le H$, and all $Q$ allocated slots are contiguous from the anchor. With no cross-period work, worst-case reservation is feasible exactly when
\begin{equation}\label{eq:batch}
 \sum_i m_iK_i\le\min\{Q,\lfloor\bar D/C\rfloor\}.
\end{equation}
\end{corollary}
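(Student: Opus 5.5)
The plan is to reduce Corollary~\ref{cor:batch} to Theorem~\ref{thm:hall} applied to a single period, and then to show that the interval-demand condition~\eqref{eq:hall} collapses to one inequality because all jobs share the same eligibility interval.

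\emph{Step 1: reduce to one period.} Since $F_i=H$ and $\bar D\le H$, every SDU released at $\ell H$ has eligibility $a_j=\ell H$ (anchor gating changes nothing, because $r_j=\ell H$ is already aligned) and deadline $d_j=\ell H+\bar D\le(\ell+1)H$. Its window $I_j$ therefore contains only slots of period $\ell$. With no cross-period work, the job set splits into independent per-period families with disjoint windows. A reservation exists overall iff one exists in each period, and by periodicity every period is the same instance. It suffices to analyze $\ell=0$.

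\emph{Step 2: compute the common window.} By~\eqref{eq:calendar} the slots are $s_q=qC$ for $q=0,\ldots,Q-1$. Slot $q$ lies in $I_j$ iff $qC+C\le\bar D$, that is, $q\le\lfloor\bar D/C\rfloor-1$. Every job therefore has the same window
\[
I=\{0,\ldots,N-1\},\qquad N=\min\{Q,\lfloor\bar D/C\rfloor\}.
\]

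\emph{Step 3: apply Theorem~\ref{thm:hall}.} For an interval $[u,v]$, the set of jobs with $I_j\subseteq[u,v]$ is either empty or all jobs, and it is all jobs only when $[u,v]\supseteq I$. The binding case is $[u,v]=I$, which gives
\[
\sum_i m_iK_i\le N.
\]
Every larger interval has greater length and the same demand, so it imposes nothing further. The nonemptiness requirement is $N\ge1$ whenever positive work exists, and this follows from the inequality above since each $m_iK_i\ge1$. Hence~\eqref{eq:hall} holds iff~\eqref{eq:batch} holds.

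\emph{Step 4: interpret ``worst-case''.} Reserving $w_j=m_jK_j$ slots covers the worst case in which every fragment consumes all $K_j$ attempts. The ``exactly'' direction is necessity of~\eqref{eq:batch} for such a guarantee, and it is the Hall necessity half applied to $I$.

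\emph{Step 5: tightness.} Tightness is immediate: the bound is attained by filling slots $0,\ldots,N-1$, for example with EDF order.

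The main obstacle is the bookkeeping in Steps~1 and~2. One must make precise that ``no cross-period work'' together with $\bar D\le H$ makes the windows of different periods disjoint. One must also get the off-by-one in the slot-fit condition $s_k+C\le d_j$ right, since it produces $\lfloor\bar D/C\rfloor$ rather than $\lceil\cdot\rceil$. Once the common window is pinned down, the rest is a direct specialization of Theorem~\ref{thm:hall}.
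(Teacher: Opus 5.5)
Your proof is correct and follows the paper's argument: both rest on the fact that every SDU sees the same anchored prefix of $\min\{Q,\lfloor\bar D/C\rfloor\}$ slots, necessity follows by counting, and $\bar D\le H$ prevents carry between periods. The only difference is that you get sufficiency by specializing Theorem~\ref{thm:hall}, whereas the paper fills the prefix consecutively in any job order; that is your Step~5, and it already proves sufficiency without invoking Hall.
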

\begin{proof}
All SDUs have the same available prefix of the period, containing exactly the number of slots on the right-hand side. Necessity is counting. Sufficiency follows by assigning the required quanta consecutively from the anchor in any job order. The construction repeats without carry because $\bar D\le H$.
\end{proof}
This is a tight bound only for the stated alignment and deadline assumptions. It is not the admission test used to certify the random-phase simulation, and it is not an infinite-horizon result for unequal periods.

A worst-case reservation guarantees a scheduled number of exchange \emph{opportunities}, provided measured service durations fit their assigned slots; it does not guarantee decoding success under arbitrary RF losses. Conditional on each fragment succeeding within its cap, all reserved SDUs can complete by deadline. Under independent attempt failures, an individual SDU then has the probability in~\eqref{eq:iid}. A joint lower bound for a reserved finite job set is
\begin{equation}\label{eq:union}
 \Prb\{\text{all jobs succeed in time}\}\ge
 \max\left\{0,1-\sum_jm_jp_j^{K_j}\right\}.
\end{equation}
The union-bound step does not require independence between jobs, but the bound $p_j^{K_j}$ on a fragment's exhaustion does require a valid conditional or independent-loss assumption. Burst correlation invalidates substituting its stationary mean into that expression. The online heuristic does not reserve $mK$ quanta and is not assigned~\eqref{eq:union} as a guarantee.

\subsection{Aggregate Capacity and Per-Stream Admission}
The exact reservation theorem characterizes a particular finite set of deadline windows. An independent work-counting argument gives an immediate test for whether a proposed population of streams can ever meet an aggregate on-time target. It applies even to a scheduler with oracle knowledge of future failures and perfect switching between flows.

\begin{proposition}[Necessary aggregate completion bound]\label{prop:capacity}
Consider $N$ streams with common SDU period $F$, each successful SDU requiring at least $m$ occupied exchange opportunities, and a stationary calendar with $Q$ opportunities every $H$ seconds. If all scheduled attempts share this service pool, the steady-state aggregate fraction $x$ of successfully completed SDUs satisfies
\begin{equation}\label{eq:capacity}
 x\le\min\left\{1,\frac{QF}{NmH}\right\}.
\end{equation}
For heterogeneous streams, if $x_i$ is the completion fraction of stream $i$, then $\sum_i(m_i/F_i)x_i\le Q/H$.
\end{proposition}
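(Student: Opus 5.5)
The argument is a work-counting (pigeonhole) bound over a long finite horizon, followed by a limit. Fix a horizon $T$, which we will take to infinity. In $[0,T)$ the stationary calendar offers at most $Q\lceil T/H\rceil\le QT/H+Q$ opportunities. Each opportunity carries at most one modeled exchange, so it can be occupied by at most one attempt of one stream. Over the same horizon, stream $i$ releases $n_i(T)=T/F_i+O(1)$ SDUs. Let $S_i(T)$ be the number of these SDUs that complete on time.

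Each completed SDU of stream $i$ has consumed at least $m_i$ occupied opportunities: one successful exchange per fragment, plus any failed attempts. The opportunities used by distinct SDUs are disjoint, because one opportunity serves one exchange. Opportunities spent on aborted or pruned SDUs, and idle ones, only add slack. This yields
\[
\sum_i m_i S_i(T)\le \frac{QT}{H}+Q .
\]
A boundary issue arises because SDUs released near $T$ may complete after $T$. I will handle it by counting only SDUs whose deadlines fall in $[0,T)$. Deadlines lag releases by at most the bounded budget $D_i$, so this changes the counts by $O(1)$ per stream. The inequality holds for every sample path and every scheduler, including one with oracle knowledge of failures, since it uses nothing but serialization.

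Next I divide by $T$ and let $T\to\infty$. I define the steady-state completion fraction as $x_i=\lim S_i(T)/n_i(T)$, or the $\limsup$ if the limit does not exist. Since $n_i(T)/T\to 1/F_i$, the inequality becomes $\sum_i (m_i/F_i)x_i\le Q/H$, which is the heterogeneous claim.

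The homogeneous case follows by specializing. With $F_i=F$, $m_i=m$, and $x=\frac1N\sum_i x_i$ the aggregate fraction, the inequality reads $Nm x/F\le Q/H$, so $x\le QF/(NmH)$. The bound $x\le1$ is trivial, and together these give \eqref{eq:capacity}.

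The step needing the most care is making ``steady-state fraction'' precise. I will state the result with $\limsup$, or equivalently in expectation for a stationary process. This makes the $O(1)$ boundary terms, the $\lceil T/H\rceil$ rounding, and the deadline carry-over vanish after dividing by $T$. It also avoids assuming any ergodicity, since the counting inequality is pathwise.
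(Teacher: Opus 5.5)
Your argument is the paper's: at least $m_i$ occupied opportunities per successful SDU, counted against the $(Q/H)T+O(1)$ opportunities the calendar supplies in $[0,T)$, then divide by $T$, pass to the limit, and specialize to identical streams. Restricting to SDUs with deadlines in $[0,T)$ is a clean way to justify the $O(1)$ boundary term, which the paper leaves implicit.

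One remark needs correcting. Taking per-stream $\limsup$s when the limits do not exist does not give the heterogeneous inequality. With $a_i(T)=S_i(T)/n_i(T)$, the sum $\sum_i (m_i/F_i)\limsup_T a_i(T)$ can exceed $\limsup_T \sum_i (m_i/F_i)a_i(T)$. For example, take a calendar that can carry only one stream's work, and a scheduler that serves two streams exclusively in alternating epochs of geometrically growing length. The two streams' $\limsup$s then sum to nearly $2$, although the calendar supports a sum of at most $1$.

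To repair this, use either of the following, or simply assume the steady-state limits exist, as the paper implicitly does:
\begin{itemize}
\item per-stream $\liminf$s, since $\sum_i c_i\liminf_T a_i(T)\le\liminf_T\sum_i c_i a_i(T)$ for $c_i>0$;
\item the expectation form.
\end{itemize}
A $\limsup$ is harmless only for the single aggregate ratio $\sum_i S_i(T)/\sum_i n_i(T)$ in the homogeneous bound.
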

\begin{proof}
Over $T$ seconds, successful SDUs require at least $\sum_i m_i n_i(T)$ occupied opportunities, regardless of retry policy. The periodic calendar supplies at most $(Q/H)T+O(1)$ opportunities. Dividing by $T$ and taking the long-run limit yields the heterogeneous inequality. For identical $m,F$, $\sum_i x_i=Nx$ gives~\eqref{eq:capacity}. All missed deadlines and failed exchanges can only reduce the attainable fraction, so the bound is necessary, never sufficient.
\end{proof}

At the baseline setting, $Q=8$ and $H=5$ ms provide 1,600 opportunities/s. The eight-byte format uses three fragments per 240-byte SDU; eight 100-frame/s streams need at least 2,400 exchanges/s \emph{before any retransmission}. Even perfect scheduling and a loss-free channel cannot exceed an aggregate completion fraction of $2/3$ in steady state. The 12-stream limit is $4/9$. The recorded pruning means (59.399\% and 39.595\%) are below but relatively close to these capacity ceilings; their high utilization is not evidence that any individual stream receives acceptable service.

Table~\ref{tab:capacity} also quantifies the consequence of adopting the distinct 24-byte group envelope in Section~III. Four fragments per SDU lower the loss-free upper bound to 50\% for eight streams and 33.33\% for twelve streams on the \emph{same} calendar. These are analytical envelope sensitivities, not rerun success measurements. A 90\% minimum completion target for every stream is therefore infeasible at either load even before channel losses. Admission, payload-budget revision, or more opportunities is a prerequisite to a meaningful per-stream QoS guarantee.

\begin{table}[t]
\caption{Loss-free aggregate capacity ceilings for baseline workload; measured column is the original eight-byte simulation.}
\label{tab:capacity}\centering\small
\begin{tabular}{cccc}\toprule
Streams $N$ & 8-B ceiling & 24-B ceiling & 8-B pruning\\\midrule
4 & 100.00\% & 100.00\% & 90.594\%\\
8 & 66.67\% & 50.00\% & 59.399\%\\
12 & 44.44\% & 33.33\% & 39.595\%\\\bottomrule
\end{tabular}
\end{table}

A per-flow target $x_i\ge\eta_i$ further requires $\sum_i m_i\eta_i/F_i\le Q/H$; window constraints from Theorem~\ref{thm:hall} are stronger. This necessary test is deliberately independent of frame loss, clock uncertainty, queueing discipline, or an unverified quality-of-experience threshold. The combined admission strategy is to reject aggregate impossible demands first, then apply a validated finite-window reservation test on the actual slot-to-connection eligibility relation. It is inappropriate to admit every offered stream and subsequently describe a high mean completion fraction as a service guarantee.

\subsection{Flush Budget as a Calendar Constraint}
A useful UWB analogue of flushing is a bound on the last permitted attempt completion, not an unqualified multiplication of two Bluetooth parameters. Let $A_j=a_j-r_j$ be anchor delay and $V_j(n)$ the elapsed time from eligibility until the completion of the $n$th reserved opportunity for job $j$, including calendar gaps and competing reservations. A sufficient budget constraint is
\begin{equation}\label{eq:flush}
 A_j+V_j(m_jK_j)\le D_i-\epsilon_i.
\end{equation}
If a negotiated event lifetime allows $f$ calendar periods and a final within-period completion offset $O_j$, an additional explicit condition is $A_j+(f-1)H+O_j\le D_i-\epsilon_i$. The opportunity allocation must still provide $m_jK_j$ services before that time. A retry cap without an eligible-window count is insufficient; an event cap without anchor alignment and final-event offset is also insufficient.

Clock uncertainty can be incorporated by shrinking windows, provided the bound is genuine and conservatively covers all relevant timestamp/dispatch errors. It cannot be justified by a nominal oscillator specification alone. If a hardware exchange sometimes spills past $C$, the unit-slot assumptions fail, and neither the theorem nor an empirically observed zero late-attempt count repairs that mismatch.

\section{Online Scheduling and Finite Design Search}
\subsection{Deadline-Aware Optimistic Pruning}
Algorithm~\ref{alg:prune} is the executed online heuristic. At slot $k$, let $z_j$ be the fragments still required and $N_k(d_j)$ the number of remaining opportunities finishing by $d_j$. When $N_k(d_j)<z_j$, even an otherwise empty channel with no future losses cannot complete that SDU. Discarding it avoids spending an exchange that cannot improve its timely complete-SDU outcome. This is an optimistic impossibility test: it ignores competition, so passing it is not evidence of feasibility. The online policy operates at the abstract opportunity selector; the configured vendor slot graph is treated separately in Section~VIII.

\begin{algorithm}[t]
\caption{Online SDU pruning and EDF service}\label{alg:prune}
\begin{algorithmic}[1]
\For{each usable opportunity $[s_k,s_k+C]$}
 \State Admit SDUs with $a_j\le s_k$
 \For{each active SDU $j$}
  \If{$s_k+C>d_j$ \textbf{or} $N_k(d_j)<z_j$}
   \State Drop $j$ and report a frame-loss event
  \EndIf
 \EndFor
 \If{an active SDU remains}
  \State Select earliest $(d_j,r_j,\mathrm{streamID})$
  \State Attempt its current fragment; charge one full $C$
  \If{exchange succeeds}
   \State Advance fragment; complete SDU if $z_j=0$
  \ElsIf{fragment has used $K_i$ total attempts}
   \State Abort the SDU
  \EndIf
 \EndIf
\EndFor
\end{algorithmic}
\end{algorithm}

A hard admission/reservation mode is necessary when each stream needs a defined attempt-service guarantee, and additional loss assumptions are required for probabilistic completion guarantees. The evaluated pruning mode remains a stress-study policy: it has neither per-flow reservation nor a minimum service-rate constraint, so starvation is an admissible outcome. In the present common-relative-deadline workload, true arrival-order FIFO with the same expiry logic is exactly EDF without pruning. We enforce this equivalence in a regression test. Consequently, any measured difference between those two curves would indicate an implementation or tie-order mismatch, not an EDF benefit.

The reference implementation uses an explicit opportunity calendar and a list of active SDUs; it makes a scheduling decision before each simulated opportunity, whereas the vendor stack can already contain queued or in-flight PDUs. With $M$ slots and at most $J$ active jobs, repeated binary-search counting and scans give an $O(MJ\log M)$ upper bound for pruning; the no-pruning version and simple reservation checker are $O(MJ)$. Calendar storage is $O(M)$, with job storage proportional to the finite input set. These are inspectable Python implementation bounds, not measured MCU execution times. More efficient deadline data structures and an incremental supply count are possible, but their runtime behavior is not evaluated here.

\subsection{Finite Transport-Budget Search}
The executed search enumerates
\begin{align*}
 F&\in\{7.5,10\}\ms,\quad H\in\{2.5,5,7.5\}\ms,\\
 P&\in\{48,80,96,120\}\ \mathrm{B},\quad K\in\{1,2,3,4,5\},\\
 b&\in\{64,96,192\}\ \mathrm{kbit/s}.
\end{align*}
There are $2\cdot3\cdot4\cdot5\cdot3=360$ candidates. Two identical synchronous streams release jobs over 60 ms; each candidate uses its own effective deadline from~\eqref{eq:budget}. The objective is
\begin{equation}\label{eq:opt}
 \min_{\theta\in\Theta}\ U(\theta)=\frac{2m(\theta)K(\theta)C}{F(\theta)},
\end{equation}
subject to finite-calendar reservation feasibility, slot-fit constraints, $R_{\rm iid}\ge0.99$ at $p=0.1$, and $b\ge96$ kbit/s. Ties prefer higher bitrate, then smaller $F,H,P$, in that order. All candidates, not just the winner, are retained. This is exhaustive only over the stated finite grid and finite set of release jobs. It does not prove an optimum across arbitrary phases, an infinite stream, continuous design variables, or codec profiles absent from the grid.

The finite search uses a transport-cost objective rather than a subjective audio score. This choice makes its decision variables and optimum auditable: the SDU size is obtained from $(b,F)$, the number of fragments from $(P,\,\mathrm{header\ size})$, and the reserved exchange load from $(m,K,C)$. The reliability constraint uses an explicit independent-failure probability rather than an unexplained quality penalty. A codec-constrained optimization would constitute a different objective, because decoding distortion and concealment are not functions of the exchange budget alone. Hence the selected point is interpreted strictly as the optimum of the stated transport design space; no perceptual audio score is imputed to it.

\section{Simulation Methodology and Results}

\subsection{Code-to-Claim Traceability}
Table~\ref{tab:traceability} maps the results to their executable origin. This distinction is important because the core model, byte-level adaptation, and compiled probe are separate programs rather than a single measured audio stack. The revised companion contains 28 unit tests: 19 retained tests covering the existing scheduler, single-context envelope and capture parser, plus nine group-envelope tests. The reservation checks include 171 small exact-oracle instances, and the archive records a further 1,200-case randomized comparison of the interval test and assignment methods. The 1,200-case archive is source provenance, not a new hardware experiment. Passing these tests does not establish that the calendar, exchange service, or presentation clock matches a radio device.
\begin{table*}[t]
\caption{Source-level evidence and limitations of the principal manuscript claims.}\label{tab:traceability}\centering\small
\begin{tabular}{p{2.7cm}p{5.5cm}p{8.0cm}}\toprule
Claim & Executable source and recorded output & Scope restriction\\\midrule
Exact finite admission & \texttt{model.py:reserve\_jobs}; \texttt{tests/test\_model.py}; \texttt{theory-validation.json} & Equal-duration exchange quanta and finite known windows; no PHY validation or arbitrary-phase infinite-horizon proof.\\
Online pruning benefit & \texttt{model.py:simulate}; \texttt{experiments.py}; \texttt{replicates.csv}; \texttt{paired\_comparisons.csv} & Shared modeled calendar and slot-indexed losses; online heuristic is not an implementation of the conservative reservation theorem.\\
Worst-stream risk & \texttt{validation.py}; \texttt{per\_stream\_fairness.csv}; \texttt{fairness\_replications.csv} & Model streams are traffic sources, not independently deployed receiver devices; minimum-stream rates expose unacceptable overload QoS.\\
SDU fragmentation & \texttt{transport.py}; \texttt{tests/test\_transport.py} & Original 8-byte receiver keyed only by sequence; conditional firmware compatible.\\
Group identity/expiry & \texttt{multistream\_transport.py}; \texttt{test\_multistream\_transport.py} & Separate 24-byte Python reference, nine new tests; not part of earlier plots, hardware, or any authenticated clock protocol.\\
Probe firmware & \texttt{hardware/iso\_probe.h}; build manifests/logs & Cross-compiled known-pattern probes, not flashed or tested on RF or audio hardware; SDK slot ownership differs from unrestricted simulator dispatch.\\\bottomrule
\end{tabular}
\end{table*}

\subsection{Executed Design, Baselines, and Statistical Unit}
The main grid contains 48 one-factor cases, six policy variants, and 12 seeds per case, totaling 3,456 executed runs. Each produces six seconds of source traffic followed by a fixed drain horizon. The cases cover logical-stream count $1,2,4,6,8,10,12$ under IID and burst losses; exchange error probabilities $0,0.05,0.1,0.2,0.3$ in both channels; generic frame lengths $7.5,10,15,20$ ms; superframes $2.5,5,7.5$ ms; PDU budgets $48,80,96,120$ bytes; caps $1$--$5$ total attempts; guards $0,20,40,80\us$; and exchange durations $120,160,200,300\us$. The design is a one-factor sensitivity study rather than a full factorial design.

The six policies are (1) one-shot FIFO, (2) bounded-retry FIFO without expiry, (3) bounded-retry FIFO with expiry, (4) work-conserving per-stream periodic round robin (RR), (5) EDF with expiry but without pruning, and (6) Algorithm~\ref{alg:prune}. All methods except the one-shot diagnostic use the configured total-attempt cap; thus, the meaningful overload attribution is between the expiry-aware variants, not the artificially under-provisioned one-shot baseline. Within each common-condition, same-seed comparison, policies share arrival phases, opportunity indices, and a channel trace drawn independently of dispatch choices. The trace continues to evolve during idle opportunities; this controls common randomness but does not validate a physical fading process. One-shot FIFO has a lower attempt cap by definition; comparisons with it are not equal-reliability designs. Bounded FIFO without expiry is intentionally weak under overload and is not the sole baseline. Periodic RR is only a same-PHY abstraction inspired by periodic service, not an implementation of CIS/BIS, FT, BN, or controller packing.

One independent pseudorandom seed constitutes an experimental replication, not each SDU within that run. We report $\bar x\pm t_{0.975,11}s_x/\sqrt{12}$ for seed-level estimates across the 12 replications in a selected scenario. Policy differences use paired seeds; marginal interval overlap is not used as a significance test. These two-sided intervals are exploratory, pointwise 95\% intervals without a family-wise multiple-testing adjustment; no post hoc selection of significant comparisons is used to establish universal ranking. Under heavy congestion, a rate over six seconds describes the stated model and run length rather than a stationary service guarantee. The full main data contain 9,504,000 modeled SDU job evaluations across repeated policies/scenarios, not that many independent physical measurements. Another 144 runs quantify per-stream fairness and 288 study packet-size/PER sensitivity. Raw SDU and attempt traces are retained for the declared burst-loss, four-stream, seed-zero case; every run retains its seed-level aggregate row.

All non-timely SDUs count as misses, including those left in a no-expiry queue at the terminal drain horizon. The original result ledger marks 429,353 end-of-horizon censored SDUs, mostly in backlog-prone policies; the exact count follows the retained six-second sources and fixed drain horizon. This is a bookkeeping classification, not right-censoring in a survival-analysis model. We report this rather than silently discard the backlog. Loss-only or latency-only comparisons would be misleading because successful-SDU quantiles condition on the subset that completed before deadline.

\subsection{Timely Completion: Gains and a Negative Result}
Figure~\ref{fig:load} shows the strong-policy comparison. At four burst-loss streams, expiry-aware FIFO, no-pruning EDF and pruning EDF all attain 90.594\% mean timely completion with a 0.834 percentage-point interval half-width. Periodic RR attains 91.698\% with half-width 0.915. The paired pruning-minus-RR difference is $-1.104\pm0.268$ percentage points across the twelve seed pairs, with the stated interval convention. The result rules out a claim of uniform dominance: interleaved per-stream service can be advantageous when losses exhibit the assumed opportunity-index correlation, even though the periodic policy is much weaker under heavier load.

At eight streams, pruning reaches 59.399\% versus 45.441\% for expiry-aware FIFO/no-pruning EDF, a paired gain of $13.958\pm2.080$ percentage points. At twelve streams, the corresponding values are 39.595\% and 27.264\%, a gain of $12.331\pm3.037$ percentage points. These absolute differences describe additional \emph{complete} SDUs served on time, not a change in bit-error rate, codec artifacts, or receiver audio quality. Table~\ref{tab:results} makes the reporting unit explicit. Periodic RR can spread partial SDU work across so many streams that few whole SDUs finish before expiry. Its collapse in this overloaded abstraction does not imply that a correctly configured Bluetooth controller or admitted LE Audio session behaves similarly.

\begin{figure*}[t]\centering
\includegraphics[width=.96\textwidth]{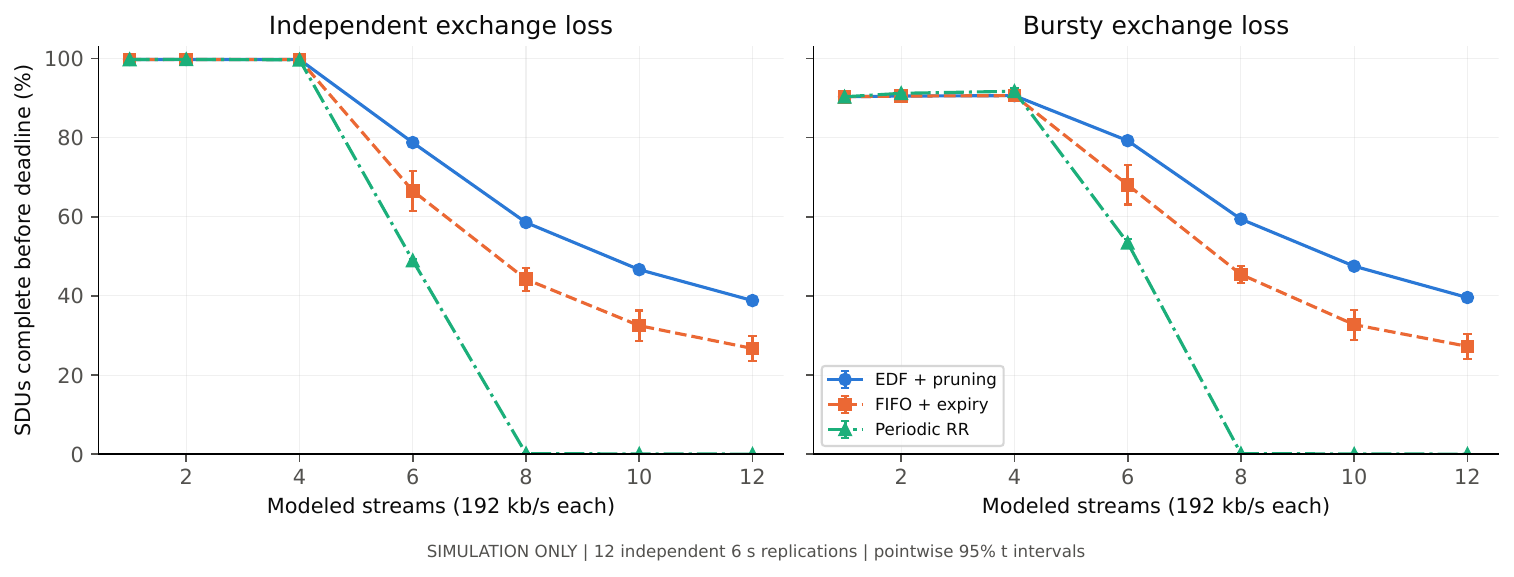}
\caption{Modeled results. Timely complete-SDU fraction versus logical-stream count, using the same UWB calendar and paired channel traces. Each point summarizes twelve independent six-second runs; bars are pointwise 95\% seed-level $t$ intervals. Periodic RR is not a standards-conformant LE Audio benchmark.}\label{fig:load}
\end{figure*}

\begin{table}[t]\caption{Burst-loss timely completion (\%), mean $\pm$ 95\% interval half-width in percentage points.}\label{tab:results}\centering\small
\begin{tabular}{rrrr}\toprule
Streams & EDF + pruning & FIFO + expiry & Periodic RR\\\midrule
4 & $90.594\pm0.834$ & $90.594\pm0.834$ & $91.698\pm0.915$\\
8 & $59.399\pm0.651$ & $45.441\pm2.083$ & $0.189\pm0.045$\\
12 & $39.595\pm0.433$ & $27.264\pm3.150$ & $0.050\pm0.017$\\\bottomrule
\end{tabular}
\end{table}

These overload cases are adverse stress tests, not admissible QoS operating points. Even with zero errors, the nominal fragment demand at twelve streams is $12\cdot3\cdot200/10000=0.72$ of wall-clock time, whereas the calendar reserves only 0.32. A fair audio system would reject or reconfigure excess demand instead of advertise the aggregate result as guaranteed service. Worst-case attempt reservation is still more demanding, with $mK$ rather than $m$ work per SDU.

\subsection{Ablation, Starvation, and Resource Use}
Figure~\ref{fig:ablation} compares pruning with the two equivalent expiry-only policies. At common relative deadlines, FIFO and EDF order cannot explain a gain; pruning is the changed decision. The equality is checked in tests and in every final main-grid completion statistic. This avoids a common pitfall in which same-anchor batches accidentally reorder FIFO by stream ID and create a spurious EDF effect.

\begin{figure*}[t]\centering
\includegraphics[width=.96\textwidth]{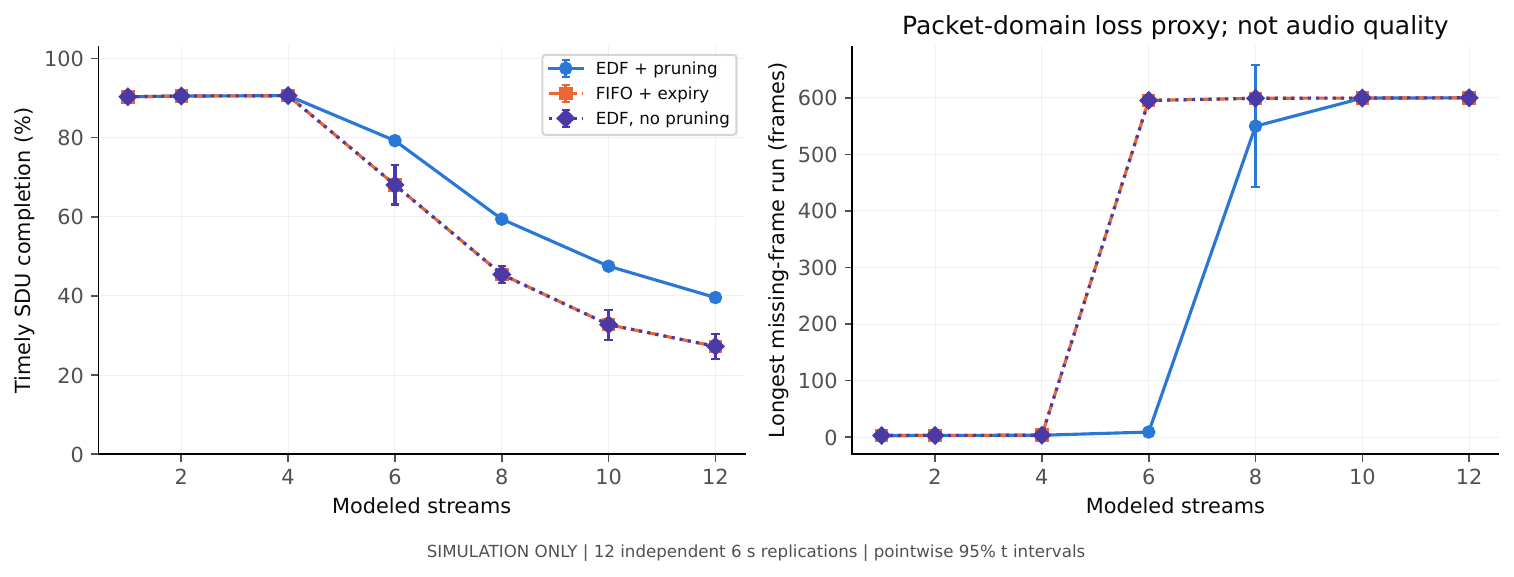}
\caption{Modeled results. Optimistic pruning ablation under burst losses. FIFO+expiry and EDF without pruning coincide in this homogeneous model. The missing-frame-run statistic is computed within each stream before taking a run maximum; it is not a PLC or listening-quality metric.}\label{fig:ablation}
\end{figure*}

Aggregate completion hides a serious limitation. In the supplemental eight-stream burst case, mean worst-stream completion for pruning is only 2.556\%, while its mean Jain index is 0.737, computed per run as $(\sum_i x_i)^2/(N\sum_i x_i^2)$ for per-stream completion fractions $x_i$. At twelve streams, the mean worst-stream completion is approximately 0.014\%, with Jain index 0.510. These are means of replication-level minima and indices, not a minimum of pooled means. A stream can therefore receive almost no timely SDUs despite an apparent aggregate efficiency gain. Figure~\ref{fig:robust} displays this failure, and Proposition~\ref{prop:capacity} explains why no scheduling policy can meet an arbitrarily high per-stream target in the specified overloaded configuration. Per-stream admission, explicit reservation, or controlled quality/rate adaptation is required; an average throughput objective alone is inadequate.

The occupied-slot fraction in Fig.~\ref{fig:latency} accounts for all attempted full opportunities, not only successful data bytes. It is useful for identifying wasted service, but not for ranking energy efficiency. The RX duty cycle, wakeup energy, CCA, and power-amplifier behavior are absent. No claim about battery lifetime follows from this figure.

\subsection{Parameter Sensitivity and Tail Interpretation}
Figure~\ref{fig:sensitivity} varies one factor at a time. Increasing frame duration at fixed presentation budget does two things simultaneously: it enlarges the SDU and reduces the remaining radio deadline. The 20 ms case leaves only 3 ms before the uncertainty guard and is not a pure test of MAC efficiency. The 15/20 ms values are generic batching assumptions, whereas LC3 operating points must be selected from actual legal codec configurations~\cite{bap,lc3white}.

PDU size changes the number of all-or-nothing fragments. At 192 kbit/s and 10 ms, a 240-byte SDU needs six fragments at $P=48$ but three at $P=96$. The primary sweep holds exchange PER fixed, which can favor large PDUs without modeling their RF error cost. The supplemental sensitivity therefore also uses the explicitly assumed surrogate
\begin{equation}\label{eq:sizeper}
 p(P)=1-(1-0.1)^{(P+20)/(96+20)}.
\end{equation}
The 20-byte equivalent fixed length is an assumption, not a measured header/BER calibration. For pruning, $P=48$ yields 57.372\% timely completion under fixed PER and 61.056\% under this lower-PER small-packet surrogate. At $P=120$, the values are 99.712\% and 99.531\%. The exercise exposes model dependence, not a calibrated optimum payload.

\begin{figure*}[t]\centering
\includegraphics[width=.94\textwidth]{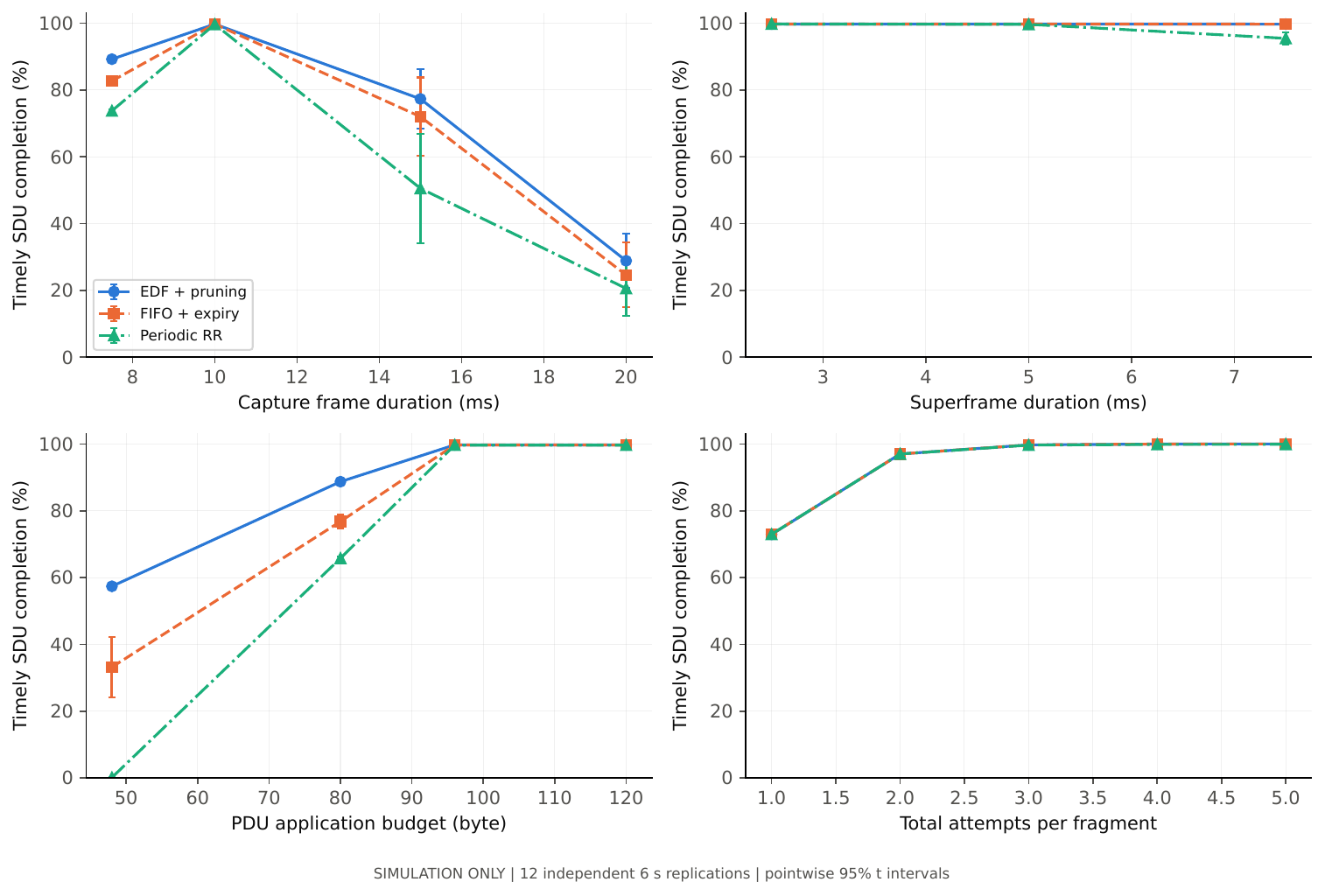}
\caption{Modeled results. One-factor frame, superframe, PDU-budget and total-attempt sweeps. Frame length changes capture and radio budgets together. Main size sweep holds exchange PER fixed; Fig.~\ref{fig:robust} supplies a separate assumed length-dependent-PER sensitivity.}\label{fig:sensitivity}
\end{figure*}

The superframe and slot sweeps also quantize capacity: $Q=\lfloor\rho H/C\rfloor$, so equal nominal share does not imply exactly equal usable share. For example, at $H=2.5$ ms and $C=0.2$ ms, $Q=4$, while at $H=7.5$ ms, $Q=12$. A different slot size changes both fragment-service rate and unused fractional opportunity budget. These are legitimate discrete design effects, but must not be attributed solely to latency.

Figure~\ref{fig:latency} reports the mean of each run's timely-SDU 99th percentile. For pruning at four burst-loss streams it is 5.888 ms; at eight and twelve streams it is 12.812 and 12.895 ms. These are not acoustic latency or an unconditional 99th-percentile service guarantee. More seriously, overloaded RR runs at eight streams have only 4--14 timely SDUs each, and at twelve streams only 0--6. Their displayed quantiles are descriptive order statistics, with an empty run omitted, not credible estimates of a population tail. The artifact preserves this raw figure for audit; we draw no tail superiority conclusion from these sparse-RR points. The pruning runs at those loads contain approximately 2,796--2,939 timely SDUs each, but even those quantiles remain conditioned on severe loss/starvation.

\begin{figure*}[t]\centering
\includegraphics[width=.96\textwidth]{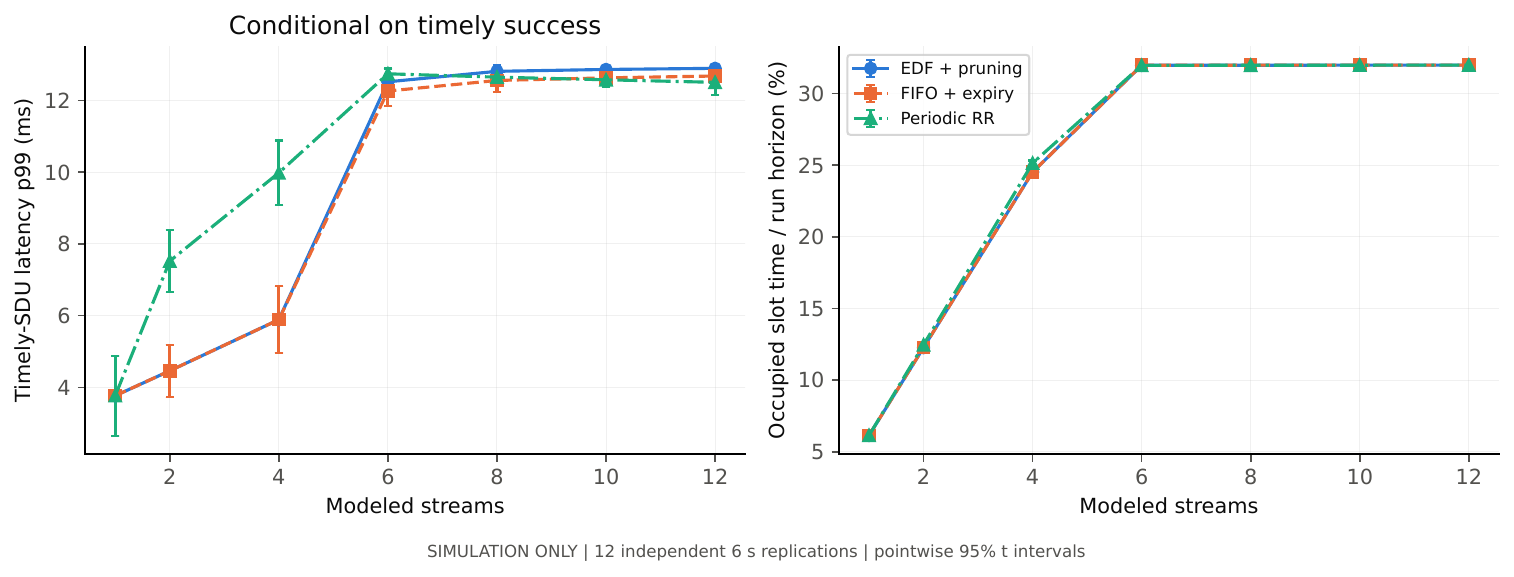}
\caption{Modeled results. Success-conditioned latency and occupied-slot fraction. RR tail points at eight/twelve streams have very few timely samples (4--14 / 0--6 per run); they are descriptive only, not robust p99 estimates. Missing SDUs are excluded from all shown latency quantiles, so the completion plot must be read alongside this figure. No measured energy or acoustic delay is shown.}\label{fig:latency}
\end{figure*}

\begin{figure*}[t]\centering
\includegraphics[width=.96\textwidth]{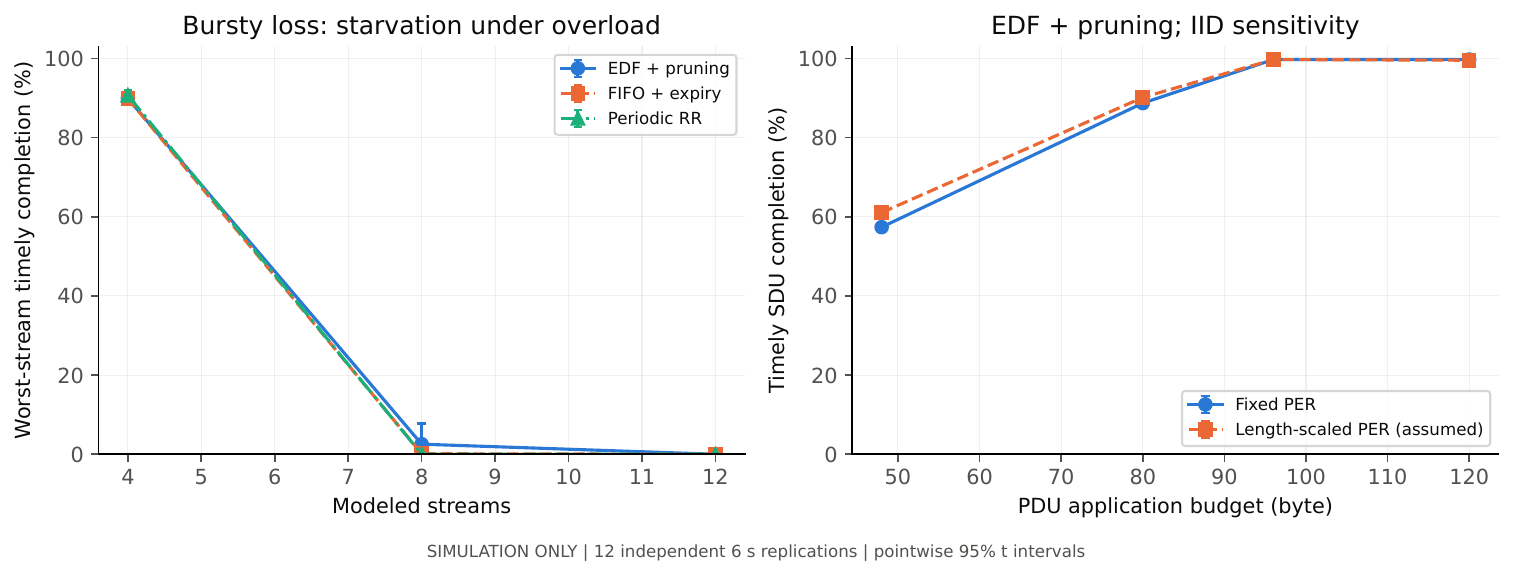}
\caption{Modeled results. Left: worst-stream timely completion, exposing overload starvation. Right: fixed-PER versus explicitly assumed length-scaled-PER sensitivity for pruning. Neither error model is a calibrated SPARK physical channel, and neither panel measures decoded audio quality.}\label{fig:robust}
\end{figure*}

\subsection{Finite Design Results and a Quantitative LE Timing Example}
The finite enumeration selects $F=7.5$ ms, $H=2.5$ ms, $P=120$ bytes, $K=2$, and $b=96$ kbit/s. An SDU is 90 bytes and fits one 112-byte body. Two streams therefore reserve four 200 $\mu$s opportunities per frame, giving $U=800/7500=10.667\%$; the uncongested IID success expression is $1-0.1^2=0.99$. Figure~\ref{fig:design} shows a declared slice of the full candidate set. The result is a finite-grid transport-budget optimum, not a demonstrated optimal LC3/PLC setting. Its 60 ms release horizon and synchronous phases cannot certify an infinite asynchronous stream.

\begin{figure*}[t]\centering
\includegraphics[width=.96\textwidth]{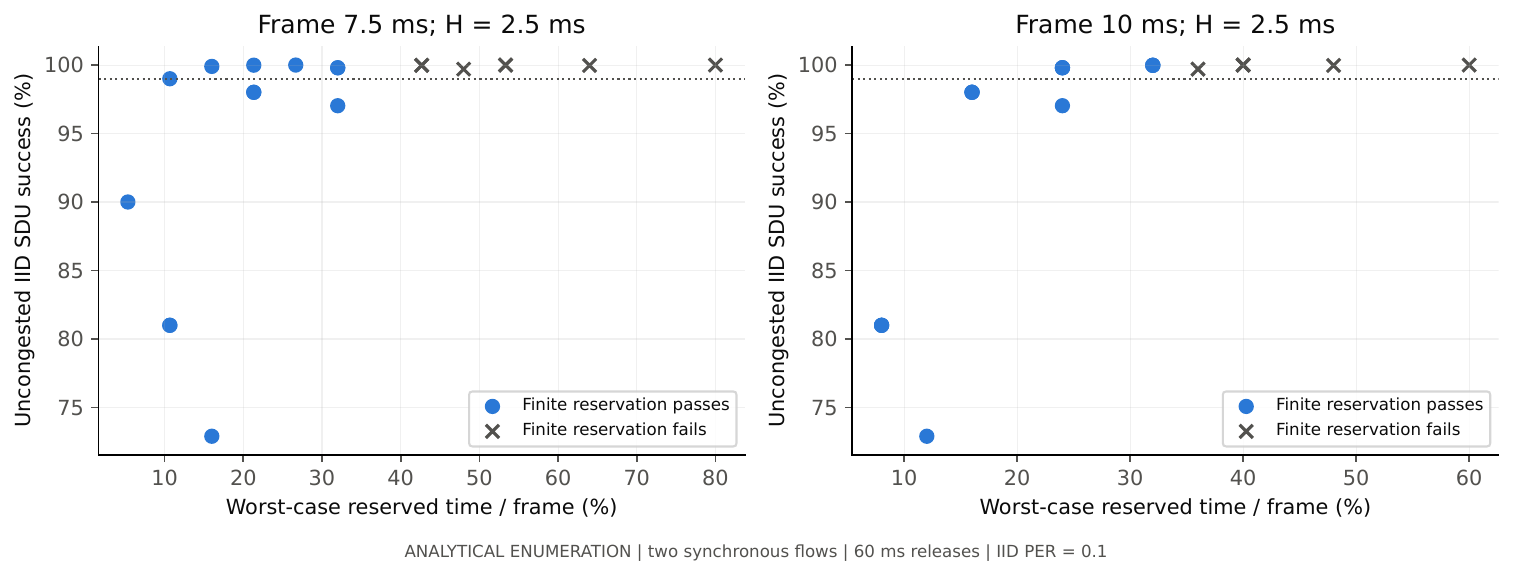}
\caption{Finite-grid enumeration. A 96 kbit/s, $H=2.5$ ms slice of the 360 candidates. Finite reservation passes/fails refer to the stated two-stream 60 ms release set, not to arbitrary-phase infinite-horizon schedulability. Reliability on the vertical axis is the uncongested IID expression, not measured packet delivery.}\label{fig:design}
\end{figure*}

A standards-based timing example clarifies the requested LE comparison without inventing BLE experiments. Take unframed CIS with $I=S=10$ ms, $U=1$, and an illustrative $G=2$ ms. Equation~\eqref{eq:bleunframed} gives 2 ms at FT=1 and 12 ms at FT=2. Simply using FT$\times I$ would instead give 10 and 20 ms, omitting the synchronization term and using the wrong reference. For segmentable framed CIS with the same $I,S,G$ and $F_{\rm framing}=S$,~\eqref{eq:bleframed} gives 22 ms at FT=1. These are algebraic illustrations, not asserted realizable controller configurations or measured latencies. They also exclude implementation costs as the standard specifies. Our 13 ms radio budget cannot be compared directly with a BAP Presentation Delay value without reconciling all reference points.

Table~\ref{tab:lemap} places the scheduling quantities alongside their LE Audio counterparts. The table is a semantic comparison, not a claim of protocol equivalence: CIS/BIS lifetimes, subevents, and presentation references have standard-specific definitions, whereas $H$, $Q$, $K$, and $d_j$ describe the UWB transport model. Consequently, the calculations above test the internal consistency of timing references without assigning a relative throughput, delay, or energy advantage to either technology.

\begin{table*}[t]\caption{LE Audio reference semantics versus the current UWB artifact. Parameters refer to distinct protocol layers.}\label{tab:lemap}\centering\small
\begin{tabular}{p{3.0cm}p{5.1cm}p{7.0cm}}\toprule
LE Audio feature & Verified interpretation & UWB model correspondence\\\midrule
CIS/CIG; BIS/BIG & Connected versus broadcast grouping with distinct feedback and timing~\cite{btll} & One-transmitter logical-stream service calendar.\\
ISO\_Interval / subevents & Event period and within-event opportunities, constrained parameters~\cite{btll} & Periodic $H,C,Q$ opportunity calendar, non-normative notation.\\
BN / NSE / FT & New payload/burst and opportunity/lifetime semantics for CIS~\cite{btll,isoal} & Per-fragment cap $K$ and absolute deadline $d_j$; distinct semantics.\\
BIS retransmissions & Broadcast repetition/pretransmission uses IRC/PTO and timing~\cite{btll,isoal} & Unicast ACK-based loss process; distinct from BIS repetitions.\\
ISOAL & Segmentation, SDU timing reconstruction and loss indication~\cite{isoal} & Independent packet envelope and receiver reassembly semantics.\\
Presentation Delay & Sink reception: reference-to-presentation; source-server direction differs~\cite{bap} & Application capture-to-presentation budget with explicit reference points.\\
LC3 / PLC & Legal codec modes and decoder behavior~\cite{bap,lc3white} & Fixed-size SDU traffic; transport-level deadline completion objective.\\\bottomrule
\end{tabular}
\end{table*}

\subsection{Artifact Organization and Reproducibility}
The source package distinguishes the calendar and channel model (\texttt{code/src/model.py}), experiment driver (\texttt{code/src/experiments.py}), and sensitivity and correctness checks (\texttt{code/src/validation.py}). Principal seed-level records include \texttt{replicates.csv}, \texttt{fairness\_replications.csv}, \texttt{size\_per\_sensitivity.csv}, \texttt{paired\_comparisons.csv}, and \texttt{optimization.json}. The reported figures are obtained from these records; model-based numerical curves are not device measurements. The eight-byte single-context transport and the distinct 24-byte group/stream/epoch implementation reside in \texttt{code/src/transport.py} and \texttt{code/src/multistream\_transport.py}, respectively. The 24-byte envelope supports disambiguated reassembly and tested expiry logic; all reported scheduler traces use the eight-byte format, whereas the 24-byte capacity ceilings are calculations from payload accounting.

Unit tests run with \texttt{PYTHONPATH=code/src python -m unittest discover -s code/tests -v}; the suite contains 28 tests across the calendar, transport, and trace parser. The supplied record also includes 171 small exact-oracle reservation checks and 1,200 randomized finite-calendar checks. These are algorithmic checks under the specified model. The accompanying code and data permit reproduction of the reservation checks, stochastic scheduling comparisons, parameter enumeration, and packet-envelope tests without attributing physical-layer performance to program execution.

\section{Controller Interface and Timing Semantics}
\subsection{Documented Service and Source-Level Integration}
The inspected SR1100 SDK 2.4.0 provides periodic slot configuration, ACK and retransmission controls, connection-specific deadlines, native fragmentation options, and transfer callbacks~\cite{spark}. The vendor's unidirectional audio example uses a 49-slot schedule with 200-$\mu$s slots, including 48 coordinator slots and one reverse slot, and demonstrates an existing PCM transport configuration. These mechanisms are controller features rather than new contributions of the proposed scheduling analysis. The inspected radio and PHY libraries are prebuilt, while the application and SDK header interfaces expose their configuration contracts.

The \texttt{swc\_connection\_set\_retransmission} interface specifies total transmission tries and a deadline expressed in backend HAL ticks. In the conditional Quasar backend examined here, the clock is a 1,000-Hz free-running counter, so 13 ticks correspond to 13~ms. This deadline is a PDU-related controller setting and is not interchangeable with an application SDU presentation deadline. Likewise, a configured slot list constrains which connection can issue a PDU: unrestricted opportunity reassignment in the mathematical model is a relaxation relative to connection-owned timeslots. The connection-to-slot eligibility relation is thus part of the service specification, not an incidental implementation detail.

The companion firmware source modifies the documented two-node Hello-World example with an eight-byte SDU envelope and a 240-byte test pattern released every 10~ms. It sends one fragment at a time, advances on an ACK-success callback, records drops and overlap skips, checks reassembled content, and reports complete-SDU events. Eight ELF/BIN artifacts were cross-compiled for the documented Quasar STM32U5A5/SR1100-family target with the recorded toolchain. The retained example schedule uses 500-$\mu$s transmit/reverse slots, giving one forward opportunity per millisecond rather than the grouped 200-$\mu$s opportunities of the primary numerical calendar. This difference prevents the compiled example's scheduling geometry from being mistaken for that of the reported numerical study.

\begin{table}[t]
\caption{Reproducible artifact classes and the associated claims.}\label{tab:hardware}\centering\small
\begin{tabular}{p{3.2cm}p{4.5cm}}\toprule
Artifact & Established scope\\\midrule
Reservation model & Mathematical feasibility and unit tests\\
Scheduling runs & Seed-level outcomes for specified opportunity and loss models\\
Eight-byte reassembly & Executed single-context byte semantics\\
24-byte envelope & Executed keyed multi-stream reassembly tests\\
Controller inspection & Documented slot and retransmission interface semantics\\
Firmware probe & Cross-compilation for the named target/toolchain\\\bottomrule
\end{tabular}
\end{table}

\subsection{Timestamp and Outcome Definitions}
The analysis distinguishes SDU release, fragment queue insertion, radio service, complete reassembly, and presentation. Let source release and receiver completion be $t^{(s)}$ and $t^{(r)}$. A one-way time difference is meaningful only when their time references are related by an identified offset, drift model, and uncertainty bound. In the analytical model, this uncertainty reduces the usable deadline by $\epsilon_i$; no synchronization quality is inferred merely from nominal tick resolution. The 16-bit millisecond field of the probe format wraps at 65.536~s and serves as a local reference rather than a globally synchronized timestamp.

An exchange attempt is distinct from a queue-enqueue callback, and an ACK-success indication is distinct from end-to-end decoded audio delivery. The parser associated with the probe records release, enqueue, ACK-success, ARQ-drop, expiry, complete-SDU, and buffer/error events. It checks sequence continuity, malformed fields, ring overflow, and timebase resets. These observability distinctions explain why the numerical objective is \emph{deadline-complete SDUs in a modeled exchange calendar}, whereas measured sound reproduction is a separate endpoint with codec and playout semantics. They also prevent accidental substitution of a transmitter-local controller clock for a receiver presentation reference.

\subsection{Protocol Overhead and Scheduler Realizability}
On a freely assignable calendar the EDF construction in Theorem~\ref{thm:hall} provides a complete finite reservation whenever the interval inequalities hold. When actual slots are tied to connections, candidate eligibility changes to a connection--slot graph and the interval theorem applies only after a compatible service pool has been identified. The online pruning rule is operationally meaningful at an application queue boundary where an SDU's remaining fragment work can be canceled; it is not an assertion that internal controller PDUs can be forcibly withdrawn at every opportunity.

The group-envelope design highlights a second interface constraint. An application using the 24-byte identity/epoch/presentation header has $72$ payload bytes per 96-byte PDU, compared with $88$ bytes for the original eight-byte header. The resulting increase from three to four fragments per 240-byte SDU changes the capacity bound before retries are considered. Thus, admission parameters must be defined for the selected encapsulation, release contract, and allowable slot graph. A capacity certificate for one wire format does not carry over to another without recomputing demand.

\section{Discussion}
\subsection{Aggregate Recovery Versus Receiver-Level Fairness}
The principal numerical effect of pruning is to reclaim capacity that expiry-aware FIFO or EDF without pruning can spend on SDUs already unable to finish. In the homogeneous workload, FIFO with expiry and EDF without pruning share the same deadline order, so EDF ordering in isolation produces no measurable improvement. The improvement observed at eight streams is instead attributable to excluding unfinishable partial SDUs. This attribution follows from the policy ablations and equality checks; it is more precise than describing the mechanism as an independently improved EDF policy.

An aggregate improvement does not certify a useful service level for every receiver. When offered fragments exceed the opportunity budget, completion must be sacrificed even on a perfect channel. At eight streams, the loss-free ceiling for the eight-byte format is $2/3$, and the lowest stream completion fraction observed in the burst-loss experiment is 2.56\%. At four streams, periodic round robin instead slightly exceeds pruning in overall timely completion. These findings make per-stream admission and scheduling fairness first-class quantities rather than secondary plots. The necessary heterogeneous capacity condition, $\sum_i(m_i/F_i)x_i\le Q/H$, is a useful screening test, although its satisfaction alone cannot overcome deadline-window or eligibility-graph restrictions.

\subsection{Interpretation of the Numerical Evidence}
The evaluation uses periodic fixed-size SDU traffic and two specified exchange-failure models. Independent-attempt loss permits the closed-form fragment expression, whereas the burst model evolves with calendar opportunities and produces correlated failures. Such models isolate scheduling effects under reproducible conditions, but loss parameters are model inputs rather than a device-specific propagation law. Success-conditioned latency excludes missed SDUs and is therefore not a stand-alone reliability measure. Run-level uncertainty comes from 12 pseudorandom replications per condition; individual comparisons have pointwise intervals, with no asserted family-wise coverage across all explored parameters.

The interval-demand theorem concerns finite known jobs with equal exchange quanta, contiguous eligibility windows, and freely assignable slots. If connection ownership, fragment dependencies, nonuniform service times, or dynamic clock uncertainty alter these assumptions, eligibility and demand must be represented in the corresponding constrained reservation model. Similarly, the finite 60-ms design enumeration establishes performance of its stated candidate grid, not an unrestricted global optimum. The controller example, keyed reassembly tests, and opportunity-level simulation address different layers of the proposed audio service; their reported outcomes retain those distinctions. This disciplined separation permits the capacity bound, numerical scheduling comparisons, and protocol overhead analysis to be read together without equating any of them to recorded acoustic performance.

\subsection{Design Implications}
Three implications follow directly. First, admission screening precedes comparison of online policies whenever offered work violates the calendar's structural capacity. Second, a scheduler that cannot complete a frame may free service for other SDUs, but a fairness-aware objective is required if every admitted stream has a minimum timely-completion target. Third, the number of fragments depends on stream-identification and timestamp metadata, so group protocol design and scheduling cannot be optimized independently. The central engineering decision is therefore a joint selection of the calendar, eligible connections, SDU envelope, attempt budget, and acceptable per-stream demand.

\section{Conclusion}
We formulated complete-SDU deadline delivery over a finite UWB opportunity calendar and separated conservative attempt reservation from online overload control. The finite interval-demand criterion and its EDF construction specify the exact reservation result under unit-slot, unrestricted-eligibility assumptions, while the aggregate bound exposes impossible offered loads without invoking a channel error model. Deadline-aware pruning recovers service that would otherwise be spent on frames incapable of timely completion; in the eight-stream burst-loss setting it improves aggregate completion by 13.958 percentage points over expiry-aware FIFO. The four-stream round-robin advantage and severely unbalanced eight-stream outcome show why aggregate salvage cannot replace per-stream service accounting. Finally, fragmentation-header overhead changes both the job workload and the provable calendar capacity. Joint consideration of admission windows, SDU identity, exchange budgets, and fairness is consequently essential to deadline-aware UWB audio transport.

\bibliographystyle{IEEEtran}
\bibliography{refs/references}

@article{coppens2022,
  author = {Dieter Coppens and Adnan Shahid and Sam Lemey and {Van Herbruggen}, Ben and Chris Marshall and {De Poorter}, Eli},
  title = {{An Overview of UWB Standards and Organizations (IEEE 802.15.4, FiRa, Apple): Interoperability Aspects and Future Research Directions}},
  year = {2022},
  doi = {10.1109/access.2022.3187410},
  journal = {IEEE Access},
  volume = {10},
  pages = {70219--70241}
}

@inproceedings{imfeld2023,
  author = {Jonah Imfeld and Silvano Cortesi and Philipp Mayer and Michele Magno},
  title = {{Evaluation of a Non-Coherent Ultra-Wideband Transceiver for Micropower Sensor Nodes}},
  year = {2023},
  doi = {10.1109/sensors56945.2023.10325219},
  booktitle = {Proc. IEEE SENSORS},
  pages = {1--4}
}

@article{yan2024,
  author = {Zhongjiang Yan and Hao Xu and Zhong Shen},
  title = {{Modeling and Analysis of the Performance for CIS-Based Bluetooth LE Audio}},
  year = {2024},
  doi = {10.1109/jiot.2024.3452187},
  journal = {IEEE Internet of Things Journal},
  volume = {11},
  pages = {40603--40618},
  number = {24}
}

@article{yan2026,
  author = {Zhongjiang Yan and Jiawei Bai and Xin Pan},
  title = {{Modeling and Analysis of Isochronous Communication Under Lifetime- and Flushing-Based Mechanisms}},
  year = {2026},
  doi = {10.1109/jiot.2026.3689201},
  journal = {IEEE Internet of Things Journal},
  volume = {13},
  pages = {31967--31981},
  number = {14}
}

@inproceedings{mantis2026,
  author = {Theo Gasteiger and Markus Schuß and Carlo Alberto Boano and Kay Römer},
  title = {{MANTIS: Towards Multiparty Communication for Bluetooth LE Audio}},
  year = {2026},
  doi = {10.1109/wfcs67029.2026.11511634},
  booktitle = {Proc. IEEE WFCS},
  pages = {1--8}
}

@inproceedings{wile2025,
  author = {Lingang Li and ZiLong Liu and Yihao Zhang and Siyuan Peng and Zhirong Liu and Yongrui Chen and Zhijun Li},
  title = {{WiLE-Audio: Wide-Coverage Low-Energy Audio via WiFi-BLE Cross-Technology Communication}},
  year = {2025},
  doi = {10.1145/3680207.3765258},
  booktitle = {Proc. ACM MobiCom},
  pages = {969--983}
}

@inproceedings{bruce2025,
  author = {Ian C. Bruce and Steve Armstrong and Daniel J. Bosnyak and Hany Tawfik},
  title = {{Opportunities and Challenges for Bluetooth LE Audio Assistive Listening Systems}},
  year = {2025},
  doi = {10.1109/icassp49660.2025.10889430},
  booktitle = {Proc. IEEE ICASSP},
  pages = {1--5}
}

@inproceedings{rtwt2024,
  author = {Andrey Belogaev and Xiaoman Shen and Chun Pan and Xingfeng Jiang and Chris Blondia and Jeroen Famaey},
  title = {{Dedicated Restricted Target Wake Time for Real-Time Applications in Wi-Fi 7}},
  year = {2024},
  doi = {10.1109/wcnc57260.2024.10571278},
  booktitle = {Proc. IEEE WCNC},
  pages = {1--6}
}

@inproceedings{fips2025,
  author = {Simon Egger and James Gross and Joachim Sachs and Gourav P. Sharma and Christian Becker and Frank Dürr},
  title = {{End-to-End Reliability in Wireless IEEE 802.1Qbv Time-Sensitive Networks}},
  year = {2025},
  doi = {10.1109/iwqos65803.2025.11143396},
  booktitle = {Proc. IEEE/ACM IWQoS},
  pages = {1--10}
}

@inproceedings{baghel2023,
  author = {Lalit Kumar Baghel and Radhika Raina and Suman Kumar},
  title = {{Evolution of Bluetooth Classic Audio towards Bluetooth LE Audio: Challenges and Road Ahead}},
  year = {2023},
  doi = {10.1109/iciis58898.2023.10253545},
  booktitle = {Proc. IEEE ICIIS},
  pages = {459--464}
}

@article{hou2012,
  author = {I-Hong Hou and P. R. Kumar},
  title = {{Queueing systems with hard delay constraints: a framework for real-time communication over unreliable wireless channels}},
  year = {2012},
  doi = {10.1007/s11134-012-9293-y},
  journal = {Queueing Systems},
  volume = {71},
  pages = {151--177},
  number = {1-2}
}

@article{hall1935,
  author = {P. Hall},
  title = {{On Representatives of Subsets}},
  year = {1935},
  doi = {10.1112/jlms/s1-10.37.26},
  journal = {Journal of the London Mathematical Society},
  volume = {s1-10},
  pages = {26--30},
  number = {1}
}

@article{liu1973,
  author = {C. L. Liu and James W. Layland},
  title = {{Scheduling Algorithms for Multiprogramming in a Hard-Real-Time Environment}},
  year = {1973},
  doi = {10.1145/321738.321743},
  journal = {Journal of the ACM},
  volume = {20},
  pages = {46--61},
  number = {1}
}

@article{elliott1963,
  author = {E. O. Elliott},
  title = {{Estimates of Error Rates for Codes on Burst-Noise Channels}},
  year = {1963},
  doi = {10.1002/j.1538-7305.1963.tb00955.x},
  journal = {Bell System Technical Journal},
  volume = {42},
  pages = {1977--1997},
  number = {5}
}

@article{kim2014,
  author = {Jin-Woo Kim and Kyeong Hur and Seong-Ro Lee},
  title = {{A Multi-channel Scheduling Scheme for Collision-Free High-Rate WPANs}},
  year = {2014},
  doi = {10.1007/s11277-014-1760-9},
  journal = {Wireless Personal Communications},
  volume = {78},
  pages = {429--448},
  number = {1}
}

@article{tiemann2022,
  author = {Janis Tiemann and Johannes Friedrich and Christian Wietfeld},
  title = {{Experimental Evaluation of IEEE 802.15.4z UWB Ranging Performance under Interference}},
  year = {2022},
  doi = {10.3390/s22041643},
  journal = {Sensors},
  volume = {22},
  pages = {1643},
  number = {4}
}

@misc{zhang2025,
  author = {Fengyun Zhang},
  title = {{A Solution of Ultra Wideband Based High-resolution and Lossless Audio Transmission}},
  year = {2025},
  howpublished = {arXiv:2508.20782},
  note = {Preprint; version explicitly identified},
  url = {https://arxiv.org/abs/2508.20782}
}

@misc{nelson2024,
  author = {Brian Nelson and Hussein Moradi and Behrouz Farhang-Boroujeny},
  title = {{UWB Narrowband Interference Survey and Design Considerations}},
  year = {2024},
  howpublished = {arXiv:2411.07052v1},
  note = {Preprint; version explicitly identified},
  url = {https://arxiv.org/abs/2411.07052v1}
}

@misc{pang2024,
  author = {Bozheng Pang and Tim Claeys and Hans Hallez and Jeroen Boydens},
  title = {{Modeling the Trade-off between Throughput and Reliability in a Bluetooth Low Energy Connection}},
  year = {2024},
  howpublished = {arXiv:2405.01231},
  note = {Preprint; version explicitly identified},
  url = {https://arxiv.org/abs/2405.01231}
}

@article{gilbert1960,
  author = {E. N. Gilbert},
  title = {{Capacity of a Burst-Noise Channel}},
  journal = {Bell System Technical Journal},
  year = {1960},
  volume = {39},
  number = {5},
  pages = {1253--1265},
  doi = {10.1002/j.1538-7305.1960.tb03959.x}
}

@misc{btll,
  author = {{Bluetooth SIG}},
  title = {{Bluetooth Core Specification 5.4, Vol. 6, Part B: Link Layer Specification}},
  howpublished = {[Online]. Available: \url{https://www.bluetooth.com/wp-content/uploads/Files/Specification/HTML/Core-54/out/en/low-energy-controller/link-layer-specification.html}},
  note = {Accessed 2026-10-17}
}

@misc{isoal,
  author = {{Bluetooth SIG}},
  title = {{Bluetooth Core Specification 6.0, Vol. 6, Part G: Isochronous Adaptation Layer}},
  howpublished = {[Online]. Available: \url{https://www.bluetooth.com/wp-content/uploads/Files/Specification/HTML/Core-60/out/en/low-energy-controller/isochronous-adaptation-layer.html}},
  note = {Accessed 2026-10-17}
}

@misc{bap,
  author = {{Bluetooth SIG}},
  title = {{Basic Audio Profile, Version 1.0.1}},
  howpublished = {[Online]. Available: \url{https://www.bluetooth.com/wp-content/uploads/Files/Specification/HTML/16212-BAP-html5/out/en/index-en.html}},
  note = {Accessed 2026-10-17}
}

@misc{ieee4ab,
  author = {{IEEE Standards Association}},
  title = {{IEEE P802.15.4ab: Project Scope and Status}},
  howpublished = {[Online]. Available: \url{https://standards.ieee.org/ieee/802.15.4ab/10694/}},
  note = {Accessed 2026-10-17}
}

@misc{spark,
  author = {{SPARK Microsystems}},
  title = {{SPARK SDK Documentation, Version 2.4.0}},
  howpublished = {[Online]. Available: \url{https://www.sparkmicro.com/sdk-docs/index.html}},
  note = {Accessed 2026-10-17}
}

@misc{lc3white,
  author = {{Bluetooth SIG}},
  title = {{Performance Characterization of the Low Complexity Communication Codec (LC3)}},
  howpublished = {[Online]. Available: \url{https://www.bluetooth.com/wp-content/uploads/2023/07/LC3Characterization_WP.pdf}},
  note = {Accessed 2026-10-17}
}

@misc{ieee4z,
  author = {{IEEE}},
  title = {{IEEE Std 802.15.4z-2020: Enhanced Ultra Wideband Physical Layers (PHYs) and Associated Ranging Techniques}},
  howpublished = {[Online]. Available: \url{https://doi.org/10.1109/IEEESTD.2020.9179124}},
  note = {Accessed 2026-10-17}
}
\end{document}